\newtheorem{theorem}{Theorem}
\newtheorem{lemma}[theorem]{Lemma}
\pgfplotsset{
    compat=1.8,
    scale only axis,
    every tick label/.append style={font=\footnotesize},
    every axis/.append style={
        xlabel shift=0.5em,
        ylabel shift=0.5em
    }
}
\newlength{\figureheight}
\newlength{\figurewidth}
\newcommand{\n}{\widehat{n}}
\newcommand{\E}[1]{\mathbb{E}\left[#1\right]}
\newcommand{\Son}{\mathcal{S}^{\text{on}}}
\newcommand{\Soff}{\mathcal{S}^{\text{off}}}
\newcommand{\cY}{\mathcal{Y}}
\newcommand{\pd}[2]{p\left(#1\mid#2\right)}
\def\approxprop{%
  \def\p{%
    \setbox0=\vbox{\hbox{$\propto$}}%
    \ht0=0.6ex \box0 }%
  \def\s{%
    \vbox{\hbox{$\sim$}}%
  }%
  \mathrel{\raisebox{0.7ex}{%
      \mbox{$\underset{\s}{\p}$}%
    }}%
}
\begin{document}

% paper title
\title{Improved event-based particle filtering in resource-constrained remote state estimation}

\author{Johan~Ruuskanen
        and~Anton~Cervin
\thanks{J. Ruuskanen and A. Cervin are with the Department of Automatic Control at Lund Univeristy, Sweden.}
\thanks{E-mail: \texttt{\{johan.ruuskanen, anton.cervin\}@control.lth.se}}}

% make the title area
\maketitle

\begin{abstract}
Event-based sampling has been proposed as a general technique for lowering the average communication rate in remote state estimation, which can be important in scenarios with constraints on resources such as network bandwidth or sensor energy. Recently, the interest of applying particle filters to event-based state estimation has seen an upswing, partly to tackle nonlinear and non-Gaussian problems, but also since event-based sampling does not allow an analytic solution for linear--Gaussian systems. Thus far, very little has been mentioned regarding the practical issues that arise when applying particle filtering to event-based state estimation. In this paper, we provide such a discussion by (i) demonstrating that there exists a high risk of sample degeneracy at new events, for which the auxiliary particle filter provides an intuitive solution, (ii) introducing a new alternative to the local predictor approach based on precomputing state estimates which is better suited to solve the issue of observer-to-sensor communication for closed-loop triggering in difficult systems, and (iii) exploring the difficulties surrounding the increased computational load when implementing the particle filter under event-based sampling.

\end{abstract}

\IEEEpeerreviewmaketitle

\section{Introduction} \label{sec:introduction}

Remote state estimation is ubiquitous in sensor-rich applications, where numerous IoT devices produce large amounts of measurements that need to be filtered and aggregated at, e.g., the nearest base station \cite{Boubiche18}. The setup is typically asymmetrical, in the sense that wireless sensor devices have limited computing and energy resources, while the base station has powerful CPUs and corded energy access. As a general technique to limit the network load and to prolong the lifetime of battery-operated devices, event-triggered communication of sensor data has been proposed in several works, e.g., \cite{AstBer99,Muller00,PerGus01, Kolios2016, Santos2019}. While appealing from an engineering viewpoint, event-based sampling also leads to more challenging state estimation problems, since the absence of a measurement also contains some information that should be used by the observer.

The sensor-side decision of when to trigger and send measurements to the remote estimator is in general an open problem, and the choice of triggering rule is highly dependent on the system dynamics and 
the performance objectives \cite{Sijs14}, \cite{Trimpe15}. The various types of triggering rules can be roughly split into two groups, namely \emph{open-loop triggers}, where the sensor independently decides when to trigger, and \emph{closed-loop triggers}, where the triggering rule is dependent on the state estimate in the observer \cite{Mohammadi17}. Two classical examples from these groups are \emph{send-on-delta} (SOD), where triggering is based on the previously sent measurement \cite{Miskowicz06}, and \emph{innovation-based triggering} (IBT), where triggering is based on the observer innovation \cite{Wu13}. An illustration of the closed-loop triggering strategy is given in Figure~\ref{fig:CL_eventtrigger}. Considering the error/communication trade-off, it is preferable to use a closed-loop trigger \cite{Trimpe15, Ruuskanen20}, but to actually obtain the observer information at the sensor is a potential problem. Traditionally, it has either been solved by letting only the sensor-to-observer communication be event based \cite{Wu13, Li19}, or by running a local filter or predictor at the sensor \cite{Shi14, Kolios2016, Li17, Liu18, SenLi19, Santos2019}. 

\begin{figure}
\medskip
	\centering
	\tikzset{every picture/.style={line width=0.75pt}} %set default line width to 0.75pt        

\begin{tikzpicture}[x=0.75pt,y=0.75pt,yscale=-0.8,xscale=0.8]
%uncomment if require: \path (0,300); %set diagram left start at 0, and has height of 300

%Straight Lines [id:da48493521134052475] 
\draw    (270,117) -- (283.5,99) ;
\draw [shift={(283.5,99)}, rotate = 306.87] [color={rgb, 255:red, 0; green, 0; blue, 0 }  ][fill={rgb, 255:red, 0; green, 0; blue, 0 }  ][line width=0.75]      (0, 0) circle [x radius= 3.35, y radius= 3.35]   ;

%Straight Lines [id:da19842899953295368] 
\draw    (221.5,118) -- (270,117) ;
\draw [shift={(270,117)}, rotate = 358.82] [color={rgb, 255:red, 0; green, 0; blue, 0 }  ][fill={rgb, 255:red, 0; green, 0; blue, 0 }  ][line width=0.75]      (0, 0) circle [x radius= 3.35, y radius= 3.35]   ;

%Straight Lines [id:da8597067389770255] 
\draw  [dash pattern={on 0.84pt off 2.51pt}]  (295.85,117) -- (411.5,117) ;
\draw [shift={(413.5,117)}, rotate = 180] [color={rgb, 255:red, 0; green, 0; blue, 0 }  ][line width=0.75]    (10.93,-3.29) .. controls (6.95,-1.4) and (3.31,-0.3) .. (0,0) .. controls (3.31,0.3) and (6.95,1.4) .. (10.93,3.29)   ;
\draw [shift={(293.5,117)}, rotate = 0] [color={rgb, 255:red, 0; green, 0; blue, 0 }  ][line width=0.75]      (0, 0) circle [x radius= 3.35, y radius= 3.35]   ;
%Rounded Rect [id:dp5495288502901928] 
\draw   (258,99.8) .. controls (258,94.39) and (262.39,90) .. (267.8,90) -- (297.2,90) .. controls (302.61,90) and (307,94.39) .. (307,99.8) -- (307,132.2) .. controls (307,137.61) and (302.61,142) .. (297.2,142) -- (267.8,142) .. controls (262.39,142) and (258,137.61) .. (258,132.2) -- cycle ;
%Straight Lines [id:da059199089359249224] 
\draw    (449,138) -- (449,177) -- (284,176) -- (284,144) ;
\draw [shift={(284,142)}, rotate = 450] [color={rgb, 255:red, 0; green, 0; blue, 0 }  ][line width=0.75]    (10.93,-3.29) .. controls (6.95,-1.4) and (3.31,-0.3) .. (0,0) .. controls (3.31,0.3) and (6.95,1.4) .. (10.93,3.29)   ;

%Rounded Rect [id:dp4881466247480435] 
\draw  [fill={rgb, 255:red, 109; green, 165; blue, 233 }  ,fill opacity=0.8 ] (152,105) .. controls (152,100.58) and (155.58,97) .. (160,97) -- (214,97) .. controls (218.42,97) and (222,100.58) .. (222,105) -- (222,129) .. controls (222,133.42) and (218.42,137) .. (214,137) -- (160,137) .. controls (155.58,137) and (152,133.42) .. (152,129) -- cycle ;
%Rounded Rect [id:dp7419090893411908] 
\draw  [fill={rgb, 255:red, 236; green, 184; blue, 95 }  ,fill opacity=0.8 ] (414,105) .. controls (414,100.58) and (417.58,97) .. (422,97) -- (476,97) .. controls (480.42,97) and (484,100.58) .. (484,105) -- (484,129) .. controls (484,133.42) and (480.42,137) .. (476,137) -- (422,137) .. controls (417.58,137) and (414,133.42) .. (414,129) -- cycle ;

% Text Node
\draw (282,128) node [scale=0.8] [align=left] {\textbf{{\small Trigger}}};
% Text Node
\draw (360,101) node [scale=0.8]  {$y_{k}$};
% Text Node
\draw (359,160) node [scale=0.8]  {$H_k$};
% Text Node
\draw (187,117) node [scale=0.8] [align=left] {\textbf{Sensor}};
% Text Node
\draw (449,117) node [scale=0.8] [align=left] {\textbf{Observer}};

\end{tikzpicture}
	\smallskip
	\caption{Remote event-based state estimation with a closed-loop trigger. The event trigger depends on information from the observer, which must be transferred back from the observer to the sensor.}
	\label{fig:CL_eventtrigger}
\end{figure}
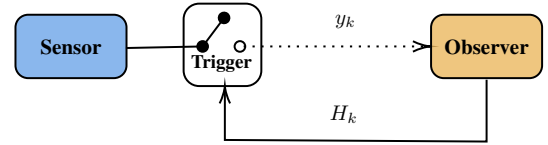

Recently, the use of particle filters \cite{Doucet08} for event-based state estimation has seen an upswing in attention from researchers \cite{Cea12, Davar17, Li19}. Particle filters provide a general way of performing state estimation for nonlinear and/or non-Gaussian systems where analytical solutions are intractable. Event-based sampling is by nature nonlinear, and it is in general impossible to find analytic solutions, even for linear--Gaussian (LG) systems \cite{Shi16}.

Previous work concerning particle filtering under event-based sampling has mainly been concerned with how to apply particle filtering to event-based state estimation under different triggering rules and system settings. However, it is widely known that particle filters suffer from drawbacks in the form of long computation times, lack of convergence guarantees, sample degeneracy, and poor performance in high dimensions. How limitations such as these affect the usability of particle filters in event-based systems remains a largely unexplored area. The aim of this paper, which is an extension of our two conference papers \cite{Ruuskanen19,Ruuskanen20}, is to shed some light on these issues and to present improvements to increase the applicability of particle filters under event-based sampling.

% The contributions
In particular, we discuss three problems arising from applying event-based sampling to particle filtering, how they relate to known drawbacks of particle filters, and how they can be mitigated:
\begin{enumerate}
    \item At new event instances, the risk of particle degeneracy becomes high. We show that the auxiliary particle filter constitutes an intuitive mitigation strategy, and derive a general way of approximating the fully-adapted particle filter for nonlinear event-based systems. 
    \item The problem of observer-to-sensor communication to achieve closed-loop triggering is exacerbated for particle filters. We provide a novel mitigation strategy based on precomputing state estimates, and present theoretical results on the optimal precomputation horizon length.
    \item The computational load can be large, especially for systems where measurement dimensionality is greater than one. We show that the approach of acceptance/rejection sampling will generate quick and adequate results for most well-behaved systems, and discuss how the event-based particle filter can be efficiently implemented. 
\end{enumerate}
Finally, a simulation study is performed to demonstrate the findings. As with most material so far published on event-based particle filters, this work focuses on the single sensor/single observer case.

\subsection{Related work}

% Event-based estimators

Some notable works regarding event-based state estimation are presented in this section. Early work included contributions such as \cite{Sinopoli04}, where the information between events were simply omitted in a Kalman filter. Later, in \cite{Suh07} a Kalman filter using send-on-delta that integrated the nonmeasurement information to increase estimation quality was presented. Further in \cite{Wu13}, IBT was introduced and an approximate MMSE derived from the Kalman filter for an event-based LG system. It was later shown that it is possible to derive the exact MMSE for an LG system under a particular stochastic triggering scheme \cite{Han15}. Further theoretical foundation for event-based state estimation under a common event trigger was presented in \cite{Sijs14}.

% Event-based particle filters
Concerning nonlinear filtering and particle filters under event-based sampling, some early work exploring the notion was performed in \cite{Cea12}. Later, the application of the particle filter to SOD sampling was performed in \cite{Sid16, Davar17}, and to stochastic sampling in \cite{Sadeghzadeh16}. In \cite{Chitraganti17}, the particle filter was later applied to sampling with IBT and comparisons of the resulting estimator to the Cram\'er--Rao bound were made. Furthermore, in \cite{Li19}, the authors developed a trigger for non-Gaussian noise distributions to be used in conjunction with a particle filter. A hybrid event-sampling scheme using fusion with quantized data to improve performance was realized using the particle filter in \cite{Mohammadi18}. Finally, the application of event-based particle filters to estimation problems in power systems and smart grids was explored in \cite{Liu18}, \cite{SenLi19}. 

% Closed-loop trigger problem  
When the IBT was introduced \cite{Wu13}, it assumed periodic observer-to-sensor communication as a way to transfer information back to the sensor. As shown in \cite{Shi14}, the same performance could be obtained with only a simple sensor-side predictor and observer-to-sensor communication at new events. In contrary, a complete local filter solution for an LG system is shown in \cite{Li17}. The closed-loop triggering problem was also briefly discussed in \cite{Mohammadi17}. For particle filters with closed-loop triggering, only periodic observer-to-sensor communication has so far been considered.

\subsection{Outline}

In Section \ref{sec:background}, we for completeness shortly present the theory of event-based state estimation and particle filters to give the necessary background to understand the rest of the paper. In Sections \ref{sec:event_sample_degeneracy}, \ref{sec:closedloop_trouble} and \ref{sec:comp_aspects}, we then in turn discuss the three issues as presented in the contributions. Further, in Section \ref{sec:evaluation} we provide a simulation study to validate the proposed improvements. Finally, in Sections \ref{sec:discussion} and \ref{sec:conclusion}, we discuss our findings and conclude the paper.

\section{Background}  \label{sec:background}

\begin{figure*}[!t]
    \small	\centering
	\begin{subfigure}[t]{0.49\linewidth}
    	\centering
	    \label{fig:trigexample_m}
	    \vspace{-1ex}
    	\begin{tikzpicture}[]

\begin{axis}[%
width=0.9\columnwidth,
height=0.35\columnwidth,
scale only axis,
separate axis lines,
%axis y line*=left,
xmin=1,
xmax=25,
ymin=-7,
ymax=25,
%ytick={0.25, 0.5, 0.75, 1.0, 1.25, 1.5},
%xtick={0, 0.5, 1, 1.5, 2, 2.5, 3, 3.5},
%xlabel={Arrival rate / server (1/s)},
%compat=1.4,
ylabel={$y_k$},
xlabel={Time step $k$},
ylabel shift = -0.25em,
legend style={draw, fill=white, font=\footnotesize, at={(0.0,1.0)}, anchor=north west},
%legend pos = north west,
legend columns=3,
legend image post style={scale=0.5},
]

\def \filename {graphics/data/triggerexample_measurements.csv}
\def \filenametrig {graphics/data/triggerexample_trigs.csv}

\addplot [color=blue, thick] table [x=X, y=Y, col sep=comma] {\filename};
\addlegendentry{$y_k$};

\addplot [const plot, name path=upper, draw=none, forget plot] table [x=Xl, y=HU, col sep=comma] {\filename};
\addplot [const plot, name path=lower, draw=none, forget plot] table [x=Xl, y=HL, col sep=comma] {\filename};
\addplot [color=green!75!blue, fill opacity=0.3, area legend] 
    fill between [of = lower and upper]; 
\addlegendentry{$H_k$};

\addplot [const plot, ultra thick, opacity=0.3] table [x=X, y=Y, col sep=comma] {\filenametrig};
\addlegendentry{$\gamma_k = 1$}

\end{axis}

\end{tikzpicture}%
    	\caption{Event-sampling using IBT.}
	\end{subfigure}
    \begin{subfigure}[t]{0.49\linewidth}
    	\centering
	    \label{fig:trigexample_p}
	    \vspace{-1ex}
    	\begin{tikzpicture}[]

\begin{axis}[%
width=0.9\columnwidth,
height=0.35\columnwidth,
scale only axis,
separate axis lines,
clip marker paths=true,
%axis y line*=left,
xmin=1,
xmax=25,
ymin=-25,
ymax=30,
%ytick={0.25, 0.5, 0.75, 1.0, 1.25, 1.5},
%xtick={0, 0.5, 1, 1.5, 2, 2.5, 3, 3.5},
%xlabel={Arrival rate / server (1/s)},
%compat=1.4,
ylabel={$x_k$},
xlabel={Time step $k$},
ylabel shift = -1em,
legend style={draw, fill=white, font=\footnotesize, at={(0.0,1.0)}, anchor=north west},
%legend pos = north west,
legend columns=3,
legend image post style={scale=0.5},
]

\def \filename {graphics/data/triggerexample_measurements.csv}
\def \filenametrig {graphics/data/triggerexample_trigs.csv}
\def \filenamepart {graphics/data/triggerexample_part.csv}
\def \filenameline {graphics/data/triggerexample_part_lines.csv}

% \addplot [const plot, name path=upper, draw=none, forget plot] table [x=Xl, y=HU, col sep=comma] {\filename};
% \addplot [const plot, name path=lower, draw=none, forget plot] table [x=Xl, y=HL, col sep=comma] {\filename};
% \addplot [color=green!75!blue, fill opacity=0.3, area legend] 
%     fill between [of = lower and upper]; 
% \addlegendentry{$H_k$};

\addplot [color=black, thin, forget plot] table [x=X, y=Y] {graphics/data/triggerexample_part_lines2.txt};

%\addplot gnuplot [raw gnuplot] {plot 'graphics/data/triggerexample_part_lines2.txt' index 1000};

\addplot [blue!50, mark=*, mark options={solid,scale=1, draw=black}, only marks] table [x=X, y=Xp, col sep=comma] {\filenamepart};
\addlegendentry{$X^i_k$};

\addplot [color=red, thick] table [x=X, y=Xh, col sep=comma] {\filename};
\addlegendentry{$x_k$};

\addplot [const plot, ultra thick, opacity=0.3] table [x=X, y=Y, col sep=comma] {\filenametrig};
\addlegendentry{$\gamma_k = 1$}

\end{axis}

\end{tikzpicture}%
    	\caption{Particles in the resulting state estimation.}
	\end{subfigure}
	\caption{Event-based state estimation with IBT and using the particle filter on the nonlinear, multimodal example system \eqref{eq:nonlin_example_system}. When $y_k$ steps outside the set $H_k$, an event $\gamma_k=1$ is triggered, and the resulting measurement sent to the observer.}
	\label{fig:trigexample}
	\par
	\noindent\makebox[\linewidth]{\rule{\linewidth}{0.4pt}}\par\smallskip
\end{figure*}
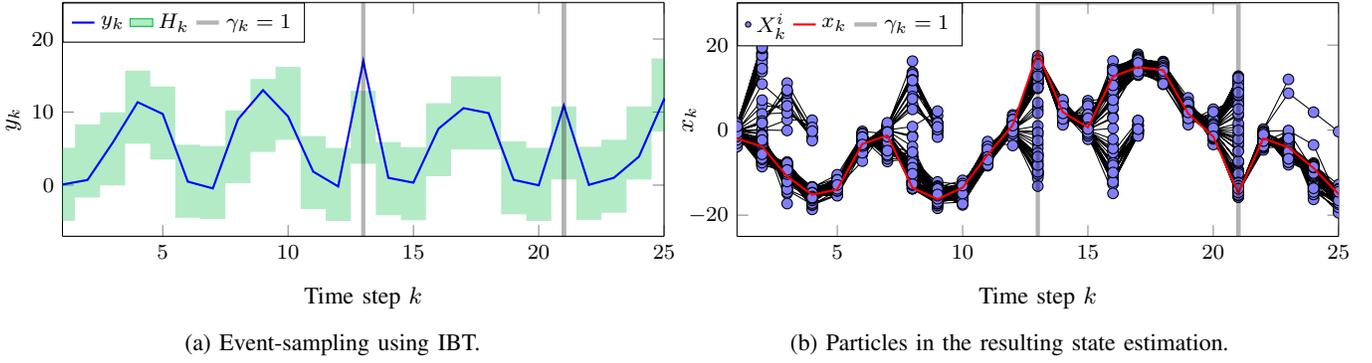

Throughout the paper, the following definitions will be used. Assume $x \in \mathbb{R}^n$ and $y \in \mathbb{R}^m$ where $m,n \in \mathbb{Z}_+$. Let $k_a:k_b := \{ k \in \mathbb{N},~\text{s.t.}~ k_a \leq k \leq k_b \}$. Let $\{a_i\}_{i\in1:b}$ be the set of elements $a_i$ in the interval $i \in 1:b$. Let $\delta_a(x)$ be the Dirac delta function with its mass located at $x = a$. Let $g(x) \propto h(x)$ denote that two functions $g,h$ are proportional, i.e., equal up to a constant factor, and $g(x) \approxprop h(x)$ that they are approximately proportional. Further, let $p(a \mid b)$ denote the density of $a$ conditioned on $b$, and let the density $p(x_k\mid x_{k-1})$ be referred to as the \emph{transition density}, $p(y_k\mid x_k)$ the \emph{likelihood density} and both $p(y_k \mid  x_{k-1})$ or $p(y_k\mid y_{k-1})$ as the \emph{predictive likelihood density} but conditioned on different variables. Let $X \sim p(x)$ denote that the value $X$ is drawn from the density $p(x)$. Finally, let $\mathcal{U}\left(\mathcal{A}\right)$ be the uniform distribution over some set $\mathcal{A} \subset \mathcal{R}^m$, $\mathcal{N}\left(x \mid \mu, \Sigma \right)$ be the multivariate Gaussian distribution with mean $\mu$ and covariance matrix $\Sigma$, and $\mathcal{C}\left(\mathcal{W}\right)$ be the categorical distribution with the corresponding finite set of probabilities $\mathcal{W}$ for the indices.

In state estimation, the goal is to estimate the unknown states $x_{1:k}$ given the measurements $y_{1:k}$. We assume a standard nonlinear state-space model in discrete time,
\begin{align} \label{eq:ssm}
	\begin{split}
		x_{k+1} &= f_k(x_k, w_k) \\
		y_k &= h_k(x_k, v_k)
	\end{split} \qquad \Leftrightarrow
	   \begin{split}
	       x_k &\sim p(x_k\mid x_{k-1}) \\
	       y_k &\sim p(y_k \mid  x_k)
	   \end{split}
\end{align}
where $w_k$ and $v_k$ are noise processes with known distributions. This can be expressed using the transition and likelihood densities as shown, which we consider to be known. Finding the unknown states conditioned on the measurements is thus the same as finding $p(x_k\mid y_{1:k}),~\forall k \in 1:T$. Using Bayes' theorem, this posterior can be expressed as
\begin{equation}
\label{eq:bayesnormal}
p(x_k\mid y_{1:k}) = \frac{p \left( y_k \mid  x_k \right) p\left(x_k \mid  y_{1:k-1}\right)}{p\left( y_k \mid  y_{1:k-1} \right)}
\end{equation}
where
\begin{align}
p\left(x_k \mid  y_{1:k-1}\right) = \int p \left( x_k \mid  x_{k-1} \right) p \left(x_{k-1} \mid  y_{1:k-1} \right) dx_{k-1}
\end{align}
Assuming a linear--Gaussian (LG) system, $p(x_k\mid y_{1:k})$ has a closed-form solution given by the Kalman filter. For a nonlinear or non-Gaussian system, solving the recursive filtering equation is in general intractable.

\subsection{Event-based state estimation}  \label{sec:background_ebse}

In event-based sampling, the observer and sensors are considered disjoint, and thus a measurement is only considered known when it triggers some predefined trigger rule at the sensor. Let $H_k$ be the set of all possible values of $y_k$ at time instance $k$ that do \emph{not} result in a trigger, and define $\gamma_k=1$ as instances where a measurement is sent, i.e. $y_k \not \in H_k$, and $\gamma_k=0$ when $y_k \in H_k$. Given that the trigger rule is known, $\gamma_k=0$ also carries information in the form of $H_k$. This can be captured by defining a hybrid variable $\mathcal{Y}_k = \{y_k\}$ when $\gamma_k=1$, and $\mathcal{Y}_k = H_k$ when $\gamma_k=0$, for which the likelihood becomes
\begin{align}
	p \left( \mathcal{Y}_k \mid  x_k \right) = \int p\left(\mathcal{Y}_k \mid  y_k \right) p\left( y_k \mid  x_k \right) dy_k
\end{align}
Following \cite{Sijs12}, $p\left(\mathcal{Y}_k \mid  y_k \right)$ can be expressed as
\begin{align}
	 p\left(\mathcal{Y}_k \mid  y_k \right) = \begin{cases}
	 	\delta_{y_k}(x) \qquad &\gamma_k = 1 \\
	 	\mathcal{U}(H_k) \qquad &\gamma_k = 0
	 \end{cases}
\end{align}
where $\mathcal{U}(H_k)$ is the uniform distribution over the set $H_k$. This implies that
\begin{align} \label{eq:event_likelihood}
p \left( \mathcal{Y}_k \mid  x_k \right) \propto \begin{cases}
	p \left( y_k \mid  x_k \right) \qquad &\gamma_k = 1 \\
	\int_{y_k \in H_k} 	p \left( y_k \mid  x_k \right) dy_k \qquad &\gamma_k = 0
\end{cases}
\end{align}
and the recursive filtering equation for event-based state estimation becomes
\begin{align} \label{eq:bayesevent}
	p(x_k\mid \mathcal{Y}_{1:k}) = \frac{p \left( \mathcal{Y}_k \mid  x_k \right) p\left(x_k \mid  \mathcal{Y}_{1:k-1}\right)}{p\left( \mathcal{Y}_k \mid  \mathcal{Y}_{1:k-1} \right)}
\end{align}
Using the definitions from \cite{Shi16}, the triggers SOD and IBT construct $H_k$ as
\begin{align}
	H_k^{\text{SOD}} &:= \left\{y_k \in \mathbb{R}^m,~\text{s.t.}\quad \|F_k(y_{k}^- - y_k) \|_{\infty} \leq \Delta \right\} \\
	H_k^{\text{IBT}} &:= \left\{ y_k \in \mathbb{R}^m, ~\text{s.t.}\quad \|F_k(\hat{y}_k - y_k)\|_{\infty} \leq \Delta \right\}
\end{align}
where $y_{k}^-$ is the previous triggered measurement, $\hat{y}_k = \mathbb{E}\left( p(y_k\mid \mathcal{Y}_{1:k-1}) \right)$ and $F_k$ is some weight matrix for normalization. 

\subsection{Particle filters} \label{sec:background_pf}
One popular choice for dealing with the intractable recursive filtering equation \eqref{eq:bayesnormal} is the particle filter \cite{Doucet08}, which approximates $p(x_k\mid y_{1:k})$ as an empirical distribution of $N$ particles $X_k^i$ with corresponding weights $W_k^i$, i.e., $p(x_k\mid y_k) \approx \sum_{i=1}^N W^i_k \delta_{X^i_k}(x_k)$. Given $\{W^i_{k-1},X^i_{k-1}\}_{i\in1:N}$ approximating $\pd{x_{k-1}}{y_{1:k-1}}$, a particle-based approximation for $p(x_k \mid  y_{1:k})$ can be obtained as follows:
\begin{enumerate}
    \item Resample by drawing indices $a^i \sim \mathcal{C}\left( \left\{ W^j_{k-1}\right\}_{j\in1:N} \right)$ with replacement to combat degeneracy,
    \item Draw $X_{k}^i$ from some proposal density $q\left(x_k \mid y_k,  X^{a^i}_{k-1} \right)$,
    \item Calculate new weights $W^i_k \propto \frac{p\left(y_k \mid  X^i_k\right) p\left(X_k^i \mid  X^{a^i}_{k-1}\right)}{q\left(X^i_k \mid  X^{a^i}_{k-1}, y_k\right)}$,
\end{enumerate}
$\forall i\in 1:N$. By initializing $X_0^i \sim p(x_0)$, it can under some minor assumptions on the state-space model and $q(\cdot)$ be shown that $\lim_{N \rightarrow \infty} \hat{p}_N(x_{1:k} \mid  y_{1:k}) = p(x_{1:k} \mid  y_{1:k})$ \cite{Gustafsson10}. Finding a good proposal can be hard, and a common choice is to simply put $q(x_k | y_k, x_{k-1}) = p(x_k | x_{k-1})$, yielding what is known as the \emph{bootstrap particle filter} (BPF) \cite{Gordon93}. 

A common problem in particle filters is the notion of \emph{particle degeneracy}. Depending on the system, situations might arise where only a few particles after a single propagation step end up in areas corresponding to a high measurement likelihood. This implies that most weight will concentrate at these few particles, yielding a poor representation of the posterior. There exist a myriad of different methods to combat the degeneracy problem, see Chapter~4 in \cite{Doucet08} and \cite{Liu14}. One common and simple approach that will be of importance in Section~\ref{sec:event_sample_degeneracy} is the \emph{auxiliary particle filter} (APF) \cite{Pitt99, Whiteley11}, where resampling is conditioned on the predictive likelihood $p \left( y_k \mid x_{k-1} \right)$ such that the indexes $a^i$ are instead drawn from $\mathcal{C}\left( \left\{ V^j_{k-1}\right\}_{j\in1:N} \right)$ where $V^i_{k-1} \propto W^i_{k-1} p \big( y_k \mid  X^i_{k-1} \big)$ and the propagated particles are weighted as
\begin{align}
\begin{split} \label{eq:APF}
W^i_k \propto \frac{p \big( y_k \mid  X^i_k \big) p \big( X^i_k \mid  X^{a^i}_{k-1} \big) }{p \big( y_k \mid  X^{a^i}_{k-1} \big) q \big(X^i_k \mid  X^{a^i}_{k-1}, y_k \big) }\quad \forall i\in 1:N.
\end{split}
\end{align}
Finding $p \left( y_k \mid x_{k-1} \right) $ is in general difficult, and often an approximation has to be used. However, if it is obtainable and used in conjunction with an optimal choice of the proposal distribution, the APF is referred to as \emph{fully adapted} \cite{Whiteley11}.

\subsection{Particle filters under event-based sampling}

A system under event-based sampling can be seen as a simple switching system, where the likelihood changes according to Eq.~\eqref{eq:event_likelihood}. Applying the particle filter to event-based state estimation is thus quite straightforward when considering a simple open-loop trigger such as SOD and the bootstrap filter \cite{Davar17}. Here particles are simply propagated with the transition density and weighted as
\begin{equation} \label{eq:likelihood_per_particle}
    p(\mathcal{Y}_k | X^i_k) = \int_{y_k \in H_k} p(y_k | X_k^i) dy_k
\end{equation}
For a closed-loop trigger, we would further need to calculate $H_{k}$ based on the posterior approximation at time $k-1$. For the IBT, this would resort to approximating the mean of the predictive likelihood as follows \cite{Ruuskanen20}:
\begin{align} \label{eq:estpredlike}
    \begin{split}
	    \hat{y}_k &= \mathbb{E} \left( p(y_k\mid \mathcal{Y}_{1:k-1}) \right) \\
	    &\approx \mathbb{E} \left(\int p(y_k \mid  x_{k}) \frac{1}{N}\sum_{i=1}^N \delta_{X^i_k} dx_k \right) \\
	    & = \frac{1}{N} \sum_{i=1}^N \mathbb{E} \left( p(y_k \mid  X^i_k) \right), \quad X^i_k \sim p\left(x_k \mid  X^{a^i}_{k-1} \right)
	\end{split}
\end{align}
An example displaying the bootstrap particle filter in conjunction with the IBT shown in Figure \ref{fig:trigexample}.

\section{Event sample degeneracy} \label{sec:event_sample_degeneracy}

The issue of degeneracy is a well-known problem in particle filters, and it is hard to believe that it would be anything different for particle filters under event-based sampling. To start, lets consider the simplest event-based estimation strategy where the no-event information $H_k$ is omitted at each $\gamma_k=0$. The filter then reduces to a pure $n$-step prediction of $x_k$ from the previous event. In the absence of measurements after an event, the particles will then start to disperse to cover an area in the state space determined by the variance of this prediction. For most systems the prediction variance is increasing in $n$ and might even be unbounded, which will lead to a considerable particle degeneracy when an event eventually occurs.

Instead, including the no-event information as shown in Section \ref{sec:background_ebse}, this particle dispersion between events will be heavily constrained due to resampling dependent on the likelihood density \eqref{eq:event_likelihood}. As can be seen, for each particle $X^i_k$ at $\gamma_k=0$ this no-event likelihood becomes the integral of the original likelihood over $H_k$. Thus, given a sequence of no-event time instances, the particles of the posterior approximation will disperse to cover the area of the state space corresponding to a significant likelihood over $H_k$. Then at the next event event, the observed $y_k$ will most likely lie somewhere near the ``edges'' of the $H_k$ set. If $H_k$ is large, this gives a big potential area of plausible measurements, resulting in a large area of plausible locations for $X^i_k$ that could give a significant likelihood. But as $p\left(\mathcal{Y}_k | x_k, \gamma_k=1 \right)$ has a more narrow support in $x_k$ than $p\left( \mathcal{Y}_k | x_k, \gamma_k=0  \right)$, only a small area of those plausible locations for $X^i_k$ will actually yield a significant likelihood once $y_k$ has been obtained. 

This implies that by using a generic particle filter, the sudden loss of support in the switching likelihood function can result in an increase of particle degeneracy at event instances. This is especially true if one considers the bootstrap particle filter, where no information regarding $\mathcal{Y}_k$ is taken into account before propagating the particles. In turn, this could seriously affect the posterior approximation both at the event-sample instances, where accuracy is often highly desired, and the performance of the event-based sampling scheme, as future $H_{k+n}$ for closed-loop triggers are dependent on the posterior approximation at $\gamma_k=1$.  

To reduce this event sample degeneracy, most standard schemes for reducing particle degeneracy should be applicable. The auxiliary particle filter (APF) \cite{Pitt99, Whiteley11} is arguably the most intuitive and straightforward method for this purpose. At the heart of the problem lies the fact that only a fraction of the particles with a significant likelihood on $\mathcal{Y}_{k-1}=H_{k-1}$ would actually have a significant predictive likelihood on $\mathcal{Y}_k=y_k$. Including the predictive likelihood in the resampling scheme would thus particles close to the true underlying $x_{k-1}$ to be retained. This is illustrated in Figure \ref{fig:illus_prop}.

\begin{figure}[t]
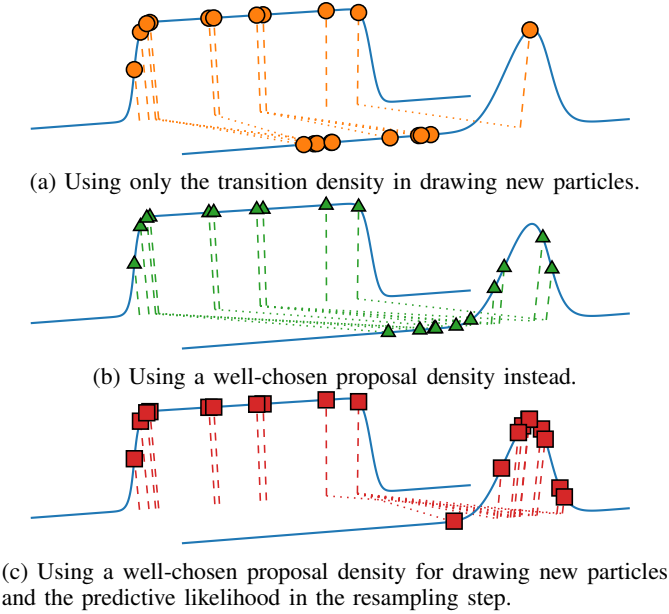

   \centering  
     \begin{subfigure}[t]{\linewidth}
        \centering 
        \setlength\figureheight{3cm}
        \setlength\figurewidth{\linewidth}
    	\input{graphics/show_bpf.tex}
    	\caption{Using only the transition density in drawing new particles.}
	\end{subfigure}
	\begin{subfigure}[t]{\linewidth}
	    \centering 
	    \setlength\figureheight{3cm}
        \setlength\figurewidth{\linewidth}
        \input{graphics/show_sis.tex}
        \caption{Using a well-chosen proposal density instead.}
    \end{subfigure}
    \begin{subfigure}[t]{\linewidth}
        \centering 
        \setlength\figureheight{3cm}
        \setlength\figurewidth{\linewidth}    
        \input{graphics/show_apf.tex}
        \caption{Using a well-chosen proposal density for drawing new particles and the predictive likelihood in the resampling step.}
    \end{subfigure}
  	\medskip
    \caption{Illustration of the effect of using a well-chosen proposal density and a resampling strategy conditioned on the predictive likelihood in a one dimensional event-based setting. The blue lines represent snapshots of the underlying posterior density $p(x_k \mid \mathcal{Y}_{1:k})$, the lines with wide support represents the resulting density for no-event samples ($\gamma_k=0$) while the lines with narrow support represents the density given an event sample ($\gamma_k=1$). The markers represent the particles.}
    \label{fig:illus_prop}
\end{figure}

\subsection{An approximate fully adapted filter for event-based systems} \label{sec:fully_adapted}

An auxiliary particle filter is considered fully adapted if the proposal density is \emph{optimal}, i.e. $q_o\left(x_k \mid y_k, x_{k-1} \right) = \pd{x_k}{y_k, x_{k-1}}$, and if the predictive likelihood $\pd{y_k}{x_{k-1}}$ is known. This is desirable as via Bayes' theorem
$$\pd{x_k}{y_k, x_{k-1}}\pd{y_k}{x_{k-1}} = \pd{y_k}{x_k}\pd{x_k}{x_{k-1}}$$
which gives that the weights \eqref{eq:APF} would become uniform, essentially implying that we are sampling from the unknown posterior density $\pd{x_k}{y_{1:k}}$. However, finding the optimal proposal and predictive likelihood densities are in general difficult, and in practice techniques to approximate these two densities are commonly employed \cite{Whiteley11}. One such common technique is to assume that the joint density $\pd{y_k}{x_k}\pd{x_k}{x_{k-1}} = p(x_k, y_k | x_{k-1})$ can at every time step $k$ be adequately approximated as a multivariate Gaussian,
\begin{align} \label{eq:joint_gauss}
    \begin{split}
    p(x_k, y_k &| x_{k-1}) \approx \mathcal{N} \left(\begin{bmatrix}
        x_k \\
        y_k
    \end{bmatrix} \Bigg|~ \bm{\mu}_k \left(x_{k-1}\right), 
    \bm{\Sigma}_k \left( x_{k-1} \right)
    \right)
    \end{split}
\end{align}
where $\bm{\mu}_k$ and $\bm{\Sigma}_k$ are dependent both on the nonlinear state space model \eqref{eq:ssm} at time step $k$, and the value of $x_{k-1}$. From this, approximations of $q_o\left( x_k \mid, y_k, x_{k-1} \right)$ and $\pd{y_k}{x_{k-1}}$ can be directly obtained. For certain systems, most notably nonlinear systems with additive Gaussian noise, the functions $\bm{\mu}_k\left(x_{k-1}\right)$ and $\bm{\Sigma}_k\left(x_{k-1}\right)$ can be obtained by applying, e.g., linearization \cite{Doucet00} or exact moment matching \cite{Saha09}.

The intractability in obtaining $q_o\left(x_k \mid x_{k-1}, \cY_k \right)$ and $\pd{y_k}{x_{k-1}}$ naturally holds for systems under event-based sampling as well, with the added difficulty of incorporating the new switching likelihood density \eqref{eq:event_likelihood}. But by assuming that \eqref{eq:joint_gauss} holds and approximating the uniform distribution $\mathcal{U}\left(H_k\right)$ in the no-event likelihood density \eqref{eq:event_likelihood} as a Gaussian mixture model, it is possible to derive a general technique for approximating the fully adapted filter in event-based systems. Before continuing, we will need the following two lemmas.
\begin{lemma} \label{lemma:joint_gauss}
Given the joint distribution of two correlated Gaussian distributed variables, $x \in \mathbb{R}^n$ and $y \in \mathbb{R}^m$,
\begin{align}
    p(x, y) = \mathcal{N}
    \left(
    \begin{bmatrix}
        x \\
        y
    \end{bmatrix} \Bigg| 
    \begin{bmatrix}
        \mu_x \\
        \mu_y
    \end{bmatrix}, 
    \begin{bmatrix}
        \Sigma_{xx} & \Sigma_{xy} \\
        \Sigma_{yx} & \Sigma_{yy}
    \end{bmatrix}
    \right),
\end{align}
the conditional distribution becomes 
\begin{align}
	\begin{split}
    p(x \mid y) & = \mathcal{N} \left(x \mid \mu_p, \Sigma_p \right), \\
    \mu_p & = \mu_x + \Sigma_{xy} \Sigma_{yy} ^{-1} \left(y - \mu_y \right), \\
    \Sigma_p & = \Sigma_{xx} - \Sigma_{xy} \Sigma_{yy} ^{-1} \Sigma_{xy}^T.
    \end{split}
\end{align}
\end{lemma}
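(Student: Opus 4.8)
The plan is to prove the identity by the standard decorrelation argument, which sidesteps any explicit inversion of the full block covariance matrix. First I would introduce the auxiliary variable $z := x - \Sigma_{xy}\Sigma_{yy}^{-1}(y - \mu_y)$, noting that $z$ is an affine function of $(x,y)$. Since affine maps of jointly Gaussian vectors are again jointly Gaussian, the pair $(z, y)$ is jointly Gaussian, and it therefore suffices to compute its first two moments and then exploit a conditional-independence structure.

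The key step is to show that $z$ and $y$ are \emph{decorrelated}. A direct computation gives $\mathrm{Cov}(z, y) = \Sigma_{xy} - \Sigma_{xy}\Sigma_{yy}^{-1}\Sigma_{yy} = 0$, so the cross-covariance block vanishes; and for jointly Gaussian vectors zero correlation implies independence, whence $z$ is independent of $y$. It then remains to read off the unconditional moments of $z$: taking expectations yields $\mathbb{E}[z] = \mu_x$ (the correction term has zero mean), while expanding the variance of the affine combination and using $\Sigma_{yx} = \Sigma_{xy}^T$ gives
\begin{align*}
\mathrm{Var}(z) &= \Sigma_{xx} - \Sigma_{xy}\Sigma_{yy}^{-1}\Sigma_{yx} - \Sigma_{xy}\Sigma_{yy}^{-1}\Sigma_{yx} + \Sigma_{xy}\Sigma_{yy}^{-1}\Sigma_{yx} \\
&= \Sigma_{xx} - \Sigma_{xy}\Sigma_{yy}^{-1}\Sigma_{xy}^T = \Sigma_p.
\end{align*}

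Finally I would conclude by rewriting $x = z + \Sigma_{xy}\Sigma_{yy}^{-1}(y - \mu_y)$. Conditioning on $y$ fixes the second summand as a deterministic offset, while by independence $z$ retains its unconditional law $\mathcal{N}(z \mid \mu_x, \Sigma_p)$; the conditional density $p(x \mid y)$ is therefore Gaussian with covariance $\Sigma_p$ and mean $\mu_x + \Sigma_{xy}\Sigma_{yy}^{-1}(y - \mu_y) = \mu_p$, as claimed. The only genuinely delicate points are the appeal to the ``uncorrelated plus jointly Gaussian implies independent'' fact and the bookkeeping in the variance expansion; a fully computational alternative would instead form $p(x \mid y) = p(x,y)/p(y)$, invert the block covariance via its Schur complement $\Sigma_{xx} - \Sigma_{xy}\Sigma_{yy}^{-1}\Sigma_{yx}$, and complete the square in $x$, but that route is heavier and more error-prone, so I would prefer the decorrelation argument above.
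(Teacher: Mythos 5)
Your proof is correct. The paper itself does not supply an argument for this lemma; it simply cites Eaton's textbook (pp.~116--117), so there is no in-paper technique to compare against. Your decorrelation argument --- introducing $z = x - \Sigma_{xy}\Sigma_{yy}^{-1}(y-\mu_y)$, checking $\mathrm{Cov}(z,y)=0$, invoking that uncorrelated jointly Gaussian vectors are independent, and reading off the conditional law of $x = z + \Sigma_{xy}\Sigma_{yy}^{-1}(y-\mu_y)$ given $y$ --- is a standard and fully valid self-contained proof, and the moment computations you give are right. The one point worth stating explicitly is that $(z,y)$ is jointly Gaussian because it is an affine image of the jointly Gaussian pair $(x,y)$ (you do note this), since the ``zero correlation implies independence'' step is only licensed under joint Gaussianity; also, the lemma as stated implicitly assumes $\Sigma_{yy}$ is invertible, which your argument needs and which the paper leaves tacit as well. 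Compared with the density-ratio route via the Schur complement that you mention as an alternative, your approach buys a cleaner derivation with no block-matrix inversion and no completing the square, at the modest cost of relying on the independence characterization for Gaussians.
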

\smallskip
\begin{proof}
See \cite[p.~116~-~117]{Eaton83}.
\end{proof}
\begin{lemma} \label{lemma:two_gauss}
The product of two independent Gaussian densities becomes a new weighted Gaussian density:
\begin{align}
    \begin{split}
    \mathcal{N} &\left( x \mid \mu_1, \Sigma_1 \right) \mathcal{N} \left(x \mid \mu_2, \Sigma_2 \right) \\
    & = \mathcal{N} \left(\mu_1 \mid \mu_2, \Sigma_1 + \Sigma_2 \right) \mathcal{N} \left(x \mid \mu_3, \Sigma_3\right), \\
    \mu_3& = \left( \Sigma_1^{-1} + \Sigma_2^{-1} \right) \left( \Sigma_1^{-1} \mu_1 + \Sigma_2^{-1} \mu_2 \right), \\
    \Sigma_3& = \left( \Sigma_1^{-1} + \Sigma_2^{-1} \right)^{-1}.
    \end{split}  \raisetag{1em}
\end{align}
\end{lemma}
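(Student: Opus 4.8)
The plan is to prove the identity by treating both sides as functions of $x$ and completing the square in the exponent of the product on the left. Write each factor as $\mathcal{N}(x \mid \mu_i, \Sigma_i) = (2\pi)^{-n/2}|\Sigma_i|^{-1/2}\exp(-\tfrac{1}{2}(x-\mu_i)^T\Sigma_i^{-1}(x-\mu_i))$ for $i=1,2$, so the product is a determinant prefactor times the exponential of the summed quadratic form $S(x) = (x-\mu_1)^T\Sigma_1^{-1}(x-\mu_1) + (x-\mu_2)^T\Sigma_2^{-1}(x-\mu_2)$. Since $S$ is quadratic in $x$, the product is necessarily proportional to a Gaussian in $x$; the whole content of the lemma is that the residual proportionality factor is \emph{exactly} $\mathcal{N}(\mu_1 \mid \mu_2, \Sigma_1+\Sigma_2)$.

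First I would expand $S(x)$ and group by powers of $x$. The second-order term is $x^T(\Sigma_1^{-1}+\Sigma_2^{-1})x$, which immediately identifies the precision $\Sigma_3^{-1} = \Sigma_1^{-1}+\Sigma_2^{-1}$, i.e. $\Sigma_3 = (\Sigma_1^{-1}+\Sigma_2^{-1})^{-1}$. The first-order term is $-2x^T(\Sigma_1^{-1}\mu_1 + \Sigma_2^{-1}\mu_2)$, so completing the square against $\Sigma_3^{-1}$ forces the center to be $\mu_3 = \Sigma_3(\Sigma_1^{-1}\mu_1 + \Sigma_2^{-1}\mu_2)$, matching the claimed expressions. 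After this step $S(x) = (x-\mu_3)^T\Sigma_3^{-1}(x-\mu_3) + c$ with $c = \mu_1^T\Sigma_1^{-1}\mu_1 + \mu_2^T\Sigma_2^{-1}\mu_2 - \mu_3^T\Sigma_3^{-1}\mu_3$ independent of $x$.

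The hard part is showing that the leftover pieces — the determinant prefactors together with $\exp(-\tfrac12 c)$ — collapse to precisely $\mathcal{N}(\mu_1 \mid \mu_2, \Sigma_1+\Sigma_2)$, not merely something proportional to a Gaussian. This needs two matrix identities. For the normalization I would use $\Sigma_1^{-1}+\Sigma_2^{-1} = \Sigma_1^{-1}(\Sigma_1+\Sigma_2)\Sigma_2^{-1}$, which on taking determinants gives $|\Sigma_3| = |\Sigma_1|\,|\Sigma_2|/|\Sigma_1+\Sigma_2|$ and hence $|\Sigma_1|^{-1/2}|\Sigma_2|^{-1/2}|\Sigma_3|^{1/2} = |\Sigma_1+\Sigma_2|^{-1/2}$, exactly the prefactor of the target. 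For the exponent I would verify $c = (\mu_1-\mu_2)^T(\Sigma_1+\Sigma_2)^{-1}(\mu_1-\mu_2)$; substituting $\mu_3 = \Sigma_3(\Sigma_1^{-1}\mu_1+\Sigma_2^{-1}\mu_2)$ and simplifying is the one genuinely fiddly computation, cleanest via the Woodbury-type identity $\Sigma_3 = (\Sigma_1^{-1}+\Sigma_2^{-1})^{-1} = \Sigma_1 - \Sigma_1(\Sigma_1+\Sigma_2)^{-1}\Sigma_1$, which makes the cross terms telescope into the single quadratic form in $\mu_1-\mu_2$.

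A slicker alternative, which reuses Lemma~\ref{lemma:joint_gauss}, is to read the product probabilistically. Taking $X \sim \mathcal{N}(\mu_1,\Sigma_1)$ and $Y \mid X \sim \mathcal{N}(X, \Sigma_2)$, the joint density is $p(x,y) = \mathcal{N}(x\mid\mu_1,\Sigma_1)\,\mathcal{N}(y\mid x,\Sigma_2)$; evaluating at $y=\mu_2$ and using that the Gaussian is symmetric in its argument and mean recovers the left-hand side. Factoring $p(x,y)=p(y)\,p(x\mid y)$ then splits it into the marginal $p(\mu_2) = \mathcal{N}(\mu_2\mid\mu_1,\Sigma_1+\Sigma_2) = \mathcal{N}(\mu_1\mid\mu_2,\Sigma_1+\Sigma_2)$ and the conditional $p(x\mid\mu_2)$, whose mean and covariance come directly from Lemma~\ref{lemma:joint_gauss} with $\Sigma_{yy}=\Sigma_1+\Sigma_2$ and $\Sigma_{xy}=\Sigma_1$; the same Woodbury identity is then needed to recast these into the stated $\mu_3,\Sigma_3$. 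Either route works, and in both I expect the matching of the $x$-independent constant — normalization plus quadratic-in-means term — to be the main obstacle.
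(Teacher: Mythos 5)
Your first route---completing the square in $x$, reading off $\Sigma_3^{-1}=\Sigma_1^{-1}+\Sigma_2^{-1}$ and the center from the linear term, and then checking that the determinant prefactors and the residual constant assemble into $\mathcal{N}\left(\mu_1\mid\mu_2,\Sigma_1+\Sigma_2\right)$ via $\left|\Sigma_3\right|=\left|\Sigma_1\right|\left|\Sigma_2\right|/\left|\Sigma_1+\Sigma_2\right|$ and $c=(\mu_1-\mu_2)^T(\Sigma_1+\Sigma_2)^{-1}(\mu_1-\mu_2)$---is correct and is essentially the computation in the reference the paper cites for this lemma; the paper itself offers no more than that citation plus the remark that a few manipulations split the result into two normal densities. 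Your probabilistic alternative, setting $X\sim\mathcal{N}(\mu_1,\Sigma_1)$, $Y\mid X\sim\mathcal{N}(X,\Sigma_2)$, evaluating the joint at $y=\mu_2$ and factoring as marginal times conditional via Lemma~\ref{lemma:joint_gauss}, is also sound and is arguably the better fit for this paper, since it reuses machinery already stated and avoids re-deriving the normalization constant by hand (the marginal of $Y$ supplies $\mathcal{N}\left(\mu_1\mid\mu_2,\Sigma_1+\Sigma_2\right)$ for free). One substantive point: the mean your derivation forces is $\mu_3=\Sigma_3\left(\Sigma_1^{-1}\mu_1+\Sigma_2^{-1}\mu_2\right)=\left(\Sigma_1^{-1}+\Sigma_2^{-1}\right)^{-1}\left(\Sigma_1^{-1}\mu_1+\Sigma_2^{-1}\mu_2\right)$, which is the correct formula, but the lemma as printed writes $\left(\Sigma_1^{-1}+\Sigma_2^{-1}\right)$ without the inverse on the leading factor---so your assertion that this ``matches the claimed expression'' is not literally true; rather, your computation exposes a typo in the statement.
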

\smallskip
\begin{proof}
See \cite{Bromiley03}. With a few algebraic manipulations, the expression can be written as two normal densities.
\end{proof}
For brevity, let the dependence on $x_{k-1}$ be implicit and set
$$\begin{bmatrix}
        \mu_x^k \\
        \mu_y^k
    \end{bmatrix} = \bm{\mu}_k\left(x_{k-1}\right), \quad
    \begin{bmatrix}
        \Sigma_{xx}^k & \Sigma_{xy}^k \\
        \Sigma_{yx}^k & \Sigma_{yy}^k
    \end{bmatrix} = \bm{\Sigma}_k \left( x_{k-1} \right)$$
Now, let us consider the two likelihood cases separately.

\textbf{At events ($\bm{\gamma_k=1}$):} Here, $p \left( \mathcal{Y}_k \mid x_k \right)$ reduces to $\pd{y_k}{x_k}$. Considering the joint Gaussian assumption, the approximations of the optimal proposal can be found directly using Lemma \ref{lemma:joint_gauss} and the predictive likelihood by marginalizing over $x_k$ in \eqref{eq:joint_gauss}.
\begin{align} \label{eq:approx_fullyadapted_gamma1}
    \begin{split}
        &q_o\left( x_k | \mathcal{Y}_k, x_{k-1}, \gamma_k=1 \right) \approx \mathcal{N}\left( x_k \mid \mu_p^k, \Sigma_p^k\right) \\
        &p \left( \mathcal{Y}_k | x_{k-1}, \gamma_k=1 \right) \approx \mathcal{N} \left( y_k \mid \mu_y^k, \Sigma_{yy}^k \right) \\
    \end{split}
\end{align} 
where
\begin{align*}
    \begin{split}
        &\quad \mu_p^k = \mu_x^k + \Sigma_{xy}^k \left( \Sigma_{yy}^k \right)^{-1} \left(y_k - \mu_y^k \right) \\
        &\quad \Sigma_p^k = \Sigma_{xx}^k - \Sigma_{xy}^k \left( \Sigma_{yy}^k \right)^{-1} \left(\Sigma_{xy}^k\right)^T
    \end{split}
\end{align*}
\textbf{At no-events ($\bm{\gamma_k=0}$):} Following \cite{Sijs12}, let the uniform distribution over $H_k$ in \eqref{eq:event_likelihood} be approximated with a mixture of $D$ Gaussians with the means located at some set of discretization points $\{z_k^j\}_{j\in 1:D}$ with associated covariance matrices $\{V_k^j\}_{j\in 1:D}$ and weights $\{\alpha_k^j\}_{j\in 1:D}$. The likelihood can thus be approximated as follows:
\begin{align} \label{eq:set_likelihood}
    \begin{split}
        p \left( \mathcal{Y}_k \mid x_k\right) &= \int \mathcal{U} \left( y_k \in H_k \right) p \left( y_k \mid x_k \right)  dy_k \\
        &\approx \sum_{j=1}^D \alpha_k^j  \int \mathcal{N} \big( y_k \mid  z^j_k, V^j_k \big) p \left( y_k \mid x_k \right) dy_k
    \end{split}
\end{align}
If the measurement function has additive Gaussian noise, then $p(y_k|x_k)$ is Gaussian. Otherwise, it can by assumption be approximated as a Gaussian using Lemma \ref{lemma:joint_gauss}. Let the (potentially approximated) Gaussian likelihood be denoted $p(y_k | x_k) = \mathcal{N} \left(y_k \mid \gamma^k_y, \Gamma_y^k \right)$. Then using Lemma \ref{lemma:two_gauss}, the likelihood approximation \eqref{eq:set_likelihood} can be reduced as follows. 
\begin{equation} \label{eq:approx_likelihood}
    \begin{aligned}
       p &( \mathcal{Y}_k \mid x_k ) \approx \sum_{j=1}^D \alpha_k^j  \int \mathcal{N} \big( y_k \mid  z^j_k, V^j_k \big) \mathcal{N} \left( y_k \mid \gamma^k_y, \Gamma^k_y \right) dy_k \\
       &= \sum_{j=1}^D \alpha_k^j \int \mathcal{N} \left( z^j_k \mid \gamma^k_y, V^j_k + \Gamma^k_y \right) \mathcal{N}(y_k \mid \mu_3, \Sigma_3) dy_k \\
        &= \sum_ {j=1}^D \alpha_k^j \mathcal{N} \left( z^j_k \mid \gamma^k_y, V^j_k + \Gamma^k_y \right)
    \end{aligned}
\end{equation}
Using this no-event likelihood approximation, an approximation to the optimal proposal can be found via 
\begin{align} \label{eq:opt_prop_approx}
    \begin{split}
        q_o & \left( x_k  \mid \cY_k, x_{k-1} \right) \propto  \pd{\cY_k}{x_k} \pd{x_k}{x_{k-1}} \\
        &\approx \sum_{j=1}^D \alpha_k^j \mathcal{N} \left( z^j_k \mid \gamma^k_y, V^j_k + \Gamma^k_y \right) \pd{x_k}{x_{k-1}}
    \end{split}
\end{align}
Here, the inner Gaussian $\mathcal{N} \left( z^j_k \mid \gamma^k_y, V^j_k + \Gamma^k_y \right)$ is the same density as (the potentially approximated) $\pd{y_k}{x_k}$ but with an additive variance of $V^j_k$. For each discretization point, we can then via the assumption for \eqref{eq:joint_gauss} create the joint Gaussian
\begin{align} \label{eq:join_gauss_disc}
    \begin{split}
    &\pd{x_k, z^j_k}{x_{k-1}} \\
    &\qquad \approx \mathcal{N} \left(    \begin{bmatrix}
        x_k \\
        z^j_k
    \end{bmatrix} \Bigg|  \begin{bmatrix}
        \mu_x^k \\
        \mu_y^k
    \end{bmatrix}, 
    \begin{bmatrix}
        \Sigma_{xx}^k & \Sigma_{xy}^k \\
        \Sigma_{yx}^k & \Sigma_{yy}^k + V_k^j
    \end{bmatrix}
    \right)
    \end{split}
\end{align}
As $\pd{x_k, z^j_k}{x_{k-1}} = \pd{x_k}{z^j_k, x_{k-1}}\pd{z^j_k}{x_{k-1}}$ using Bayes, an approximate optimal proposal can be derived from \eqref{eq:opt_prop_approx} as the following Gaussian mixture distribution.
\begin{align} \label{eq:approx_optprop_gamma0}
    \begin{aligned}
     &q_o(x_k \mid x_{k-1}, \mathcal{Y}_k, \gamma_k=0) \\
     &\approxprop \sum_{j=1}^D \alpha_k^j p\big(x_k \mid x_{k-1}, z^j_k \big) p \big( z^j_k \mid  x_{k-1} \big) \\
     &= \sum_{j=1}^D \alpha_k^j \mathcal{N}\left( z_k^j \mid \mu_y^k, \Sigma_{yy}^k + V_k^j  \right) \cdot\mathcal{N} \left( x_k \mid \mu_p^{k,j}, \Sigma_p^{k,j} \right)
    \end{aligned}
\end{align} 
where
\begin{align}
    \begin{aligned}
     & \quad  \mu_p^{k,j} = \mu_x^k + \Sigma_{xy}^k \left( \Sigma_{yy}^k + V^j_k \right)^{-1} \left(z^j_k - \mu_y^k \right) \\
     & \quad \Sigma_p^{k,j} = \Sigma_{xx}^k - \Sigma_{xy}^k \left( \Sigma_{yy}^k + V^j_k \right)^{-1} \left(\Sigma_{xy}^k\right)^T
    \end{aligned}
\end{align}
By marginalizing over $x_k$ in \eqref{eq:join_gauss_disc}, an approximation of the predictive likelihood can be obtained as follows.
\begin{equation}
\begin{aligned} \label{eq:approx_predlike_gamma0}
    p(\mathcal{Y}_k \mid& x_{k-1}, \gamma_k=0) \approx \int \sum_{j=1}^D \alpha_k^j  p \big( x_k, z^j_k \mid x_{k-1} \big) dx_k \\
    &\approx \sum_{j=1}^D \alpha_k^j \mathcal{N}\left( z_k^j \mid \mu_y^k, \Sigma_{yy}^k + V_k^j  \right) 
\end{aligned}
\end{equation}

This provides a general and easily implemented approach of approximating the fully-adapted filter for certain systems under event-based sampling, but it is not without its own limitations. First, the performance is dependent on how well $p(x_k, y_k | x_{k-1})$ can be approximated as a Gaussian. The worse the approximation is, the more particles will be needed. It however still holds that $\lim_{N \rightarrow \infty} \hat{p}_N(x_{1:k} \mid  \cY_{1:k}) = p(x_{1:k} \mid  \cY_{1:k})$ if we design the joint Gaussian approximation such that it has a wider support in $x_k, y_k$ than $p(x_k, y_k | x_{k-1})$. Second, how to perform the discretization of the uniform distribution can be discussed. In our experiments, as we are approximating a uniform distribution, we have simply resorted to equidistant gridding of $H_k$, identical covariance matrices $V^j_k = V_k$ and uniform weights $a_k^j = 1/D$ $\forall j \in 1:D, k \in 1:T$. Further, the method suffers from an increased computational burden, as each of the $D$ discretization points will have to be evaluated for each particle giving every iteration for $\gamma_k=0$ a time complexity of $\mathcal{O}(DN)$. Finally, as the number of measurement dimensions $m$ increase, $D$ would have to increase exponentially to retain a similar level of approximation making the approach ill-suited to target systems with large $m$.

\section{Closed-loop triggering in resource-constrained environments} \label{sec:closedloop_trouble}

As event-based sampling reduces the necessary transmissions between the sensor and observer, it has an important application in reducing the sensor energy consumption and network load in resource-constrained remote sensing environments. The fewer transmissions needed, the fewer bytes need to be sent across the network and the longer time the sensor can spend with its radio powered off. 

As the performance of open-loop triggers is inferior to that of closed-loop triggers regarding the error/communication trade-off, being able to fully utilize the benefits of closed-loop triggering will be important for certain applications. However, the inherent disadvantage of closed-loop triggers in that deriving $H_k$ depends on the estimated state of the observer is a problem that needs to be addressed. In event-based literature, this problem is commonly addressed based on one of the two following approaches. 

\subsubsection{Periodic downlink}
The issue is partially ignored by allowing the sensor radio to still receive $H_k$ periodically from the observer, which requires the sensor radio to wake up periodically at sample times \cite{Wu13, Li19}. It is clear that this approach does not make sense for reducing network load or sensor energy consumption in practice unless (i) it is more expensive to transmit than receive for the sensor, and (ii) there does not exist sensor capability to perform adequate local predictions. Otherwise, improvements could be made by either using the local predictor approach or by simply sending $y_k$ at the periodic sensor radio wake-up.

\subsubsection{Local predictor}
When discussing resource-constrained environments in particular, it is common to allow the sensor to run a simple local filter/predictor that can generate predictions of $H_k$, based on the state estimate obtained from the observer at the previous event \cite{Shi14, Kolios2016, Li17, Liu18, SenLi19, Santos2019}. Thus, the sensor radio only needs to wake-up once per event period and only perform one send/receive transmission for the measurement and resulting observer state estimation. This approach, however, necessitates that the simple local predictor is accurate enough to generate an adequate error/communication trade-off, and a more complex sensor and higher energy cost to perform the predictions. In simple linear and Gaussian settings, i.e., the type of settings typically considered in event-based literature, this makes sense as we can often get away with simple predictors that require little computational power. 

However, in situations that require a computationally heavy filter such as the particle filter, this is not something that we can take for granted. The local predictor necessary to achieve an adequate error/communication trade-off might require expensive computational hardware to run, which is, e.g., when considering IoT sensors orthogonal to the desired characteristics of low complexity, small price tag and low in-between sample energy cost \cite{Callebaut2021}. In these cases, an alternative third approach is instead to consider the possibility of precomputing a sequence of $H_{k+1:k+n}$ after each $\gamma_k=1$.

\subsection{Precomputation of trigger bounds}

In event-based state estimation, given $p(x_k \mid \mathcal{Y}_{1:k})$ it is possible to precompute the state estimates at times $k+1:k+n$ by conditioning on $\gamma_{k+1:k+n} = 0$, i.e. that starting from time step $k$ we do not trigger until after $n$ steps. If $\gamma_{k+1:k+n} = 0$, then we know that $y_{k+1:k+n} \in H_{k+1:k+n}$ and thus the posterior $p(x_{k+1:k+n} | \mathcal{Y}_{1:k+n}, \gamma_{k+1:k+n} = 0)$ is independent of the measurements at samples $k+1:k+n$.

Unfortunately, $n$, the number of time steps until the next event, is a stochastic variable that is in general not known in advance. However, by assuming some upper bound $\n$, the observer can at time step $k$ for which $\gamma_k=1$ first estimate $p(x_{k} \mid \mathcal{Y}_{1:k})$ and then sequentially estimate $p(x_{k+i} \mid \gamma_{k+1:k+i-1} = 0, \mathcal{Y}_{1:k})$ $\forall i \in 2:\n$, which can be used to compute and transmit $H_{k+1:k+\n}$ to the sensor at time step $k$. The following approach can then be constructed to remedy the closed-loop issue.

\textbf{The precomputation approach}
\begin{enumerate}
    \item At an event at time step $k$, the sensor radio wakes up, and transmits $y_k$ to observer.
    \item The observer computes $\{p(x_{k+i} | \mathcal{Y}_{1:k}, \gamma_{k+1:k+i-1} = 0)\}_{i\leq \n}$ and $H_{k+1:k+\n}$ for some choice of $\n$.
    \item The sensor receives $H_{k+1:k+\n}$ and powers off its radio.
    \item At each following time step $k+i$, the sensor compares $y_{k+i}$ to $H_{k+i}$ until violated or $i =\n$, in which case a new event is triggered. 
\end{enumerate}
Due to the independence of the measurements, when a triggering eventually occurs at the time step $k+n$, the precomputed posterior estimations $k+1:k+n-1$ is the same that would have been obtained if we computed and transmitted each $H_i \in H_{k+1:k+n}$ after it was known that $\gamma_i=0$. Furthermore, the bound $n \leq \n$ will need to be enforced, such that if time step $k+\n$ is reached, the sensor will force a premature triggering to receive new bounds. This implies that the event period will be shorter than using the periodic downlink approach, however, as the probability to traverse $\n$ samples without triggering decreases with increasing $\n$ it can be seen that $\E{n \mid \n} \rightarrow \E{n}$ as $\n$ grows large.

\subsection{When precomputation is a viable alternative}

Precomputing $H_{k+1:k+\n}$ in the presented manner leads to wasteful computations and transmissions in that all $\n \geq n$ bounds need to be obtained at the sensor in each event period. The wasteful computations, however, lie on the (assumed) less constrained observer and do not burden the sensor except for the waiting time induced from the time the sensor sends $y_k$ until $H_{k+1:k+\n}$ has been remotely computed and received. Furthermore, the transmission of $H_{k+1:k+\n}$ will occur in a single batch at each event. It is well known that it can be more efficient to send more data at once than less data over multiple connections \cite{Bakshi1997}. This is due to that multiple connections require multiple connection initiation overheads which takes time, and costs energy and bandwidth, and that a batch transmission can benefit from different compression strategies. In this case, the data transmitted in $y_k$ and $H_{k}$ is further very small, and thus its cost is comparable to the cost of overhead. 

In all, this makes the precomputation approach a viable alternative to the periodic downlink and local filter approaches in solving the closed-loop issue. It is generally preferable in situations when periodic wake-up and connection initiation cost is too high, and where a cheap local predictor does not give adequate performance for the considered system. Furthermore, compared to the local filter approach, precomputation is a much simpler method to apply in practice, as we do not need to spend resources to construct, implement, nor test the sensor-side local predictor. 

\subsection{Choosing the number of precomputation steps} \label{sec:precomp_choosing_nhat}

Choosing $\n$ for the precomputation approach gives rise to a trade-off between wasted computation and quality. If we trigger before $\n$, then the remaining $\n - n$ steps of trigger bounds and posterior estimations will have been wastefully computed and transmitted. On the other hand, if no triggering occurs in $\n$ samples, the sensor will have to trigger prematurely, reducing the overall performance of the event-based scheme. Considering a computationally cheap filter, the impact of wasted computations could be considered negligible as the observer is assumed to have a high computational capacity, and thus $\n$ can simply be set to some large value. This is harder to argue for if a more computationally expensive method such as the particle filter is used. Furthermore, the distribution of $n$ could very well be dependent on $p(x_k | y_k)$, making it hard to find a static $\n$ sufficient at all $k$. Instead, it would be beneficial to dynamically choose $\n$ in some smart way.

\subsubsection{Estimating the trigger probability}
To choose $\n$ in a smart way, it is vital that we can obtain some estimation of the event triggering probability. First note that, for a new event at time step $k$, we can find the probability to trigger in the next time step as
\begin{align} \label{eq:one_step_trig_prob}
    \begin{split}
        p(\gamma_{k+1} = 1 &\mid \mathcal{Y}_{1:k}) = p(y_{k+1} \not\in H_{k+1} \mid  \mathcal{Y}_{1:k})\\
        &= \int_{y_{k+1} \not\in H_{k+1}} \int_{x_{k+1}} p(y_{k+1} | x_{k+1}) \\
	    & \quad \cdot p(x_{k+1} | \mathcal{Y}_{1:k}) dx_{k+1} dy_{k+1} 
    \end{split}
\end{align}
This integral is in general intractable, but as $p(x_{k+1} | \mathcal{Y}_{1:k})$ is the state propagation distribution, the particle approximation at time step $k$ can be utilized to approximate \eqref{eq:one_step_trig_prob} as
\begin{align} \label{eq:one_step_particle_approx}
	\begin{split}
	p(y_{k+1} &\not\in H_{k+1} \mid  \mathcal{Y}_{1:k}) \\
	&\approx \frac{1}{N} \sum_{i=i}^N \int_{y_{k+1} \not\in H_{k+1}} p(y_{k+1}|X^i_{k+1}) dy_{k+1} \\
	&\qquad X^i_{k+1} \sim p(x_{k+1} | X^{a^i}_{k})
	\end{split}
\end{align}
Denote $p(\gamma^+_{k+i} = 1 \mid \mathcal{Y}_{1:k}) = p(\gamma_{k+i} \mid \gamma_{k+1:k+i-1} = 0, \mathcal{Y}_{1:k})$, i.e. the probability that given no triggering since $k$ we trigger at time step $k+i$. Via the precomputation approach, \eqref{eq:one_step_trig_prob} and \eqref{eq:one_step_particle_approx} can be repeated at each time step $k+i$ to obtain an estimation of $p(\gamma^+_{k+i} = 1 \mid \mathcal{Y}_{1:k})$ sequentially $\forall i \in [1:\n]$. Then using the binary nature of event triggering, the probability to trigger for the first time at $k+n$ since $k$, can be obtained as 
\begin{align} \label{eq:trigger_prob}
    \begin{split}
	    p_T(n\mid \mathcal{Y}_{1:k}) &= p(\gamma_{k+n}^+ = 1\mid \mathcal{Y}_{1:k}) \\
	    & \quad \cdot \prod_{i=1}^{n-1} \big[ 1 - p(\gamma_{k+i}^+ = 1\mid \mathcal{Y}_{1:k}) \big]
    \end{split}
\end{align}
which can be readily estimated.

\subsubsection{Choosing $\n$ by quantile estimation} A simple, ad-hoc dynamic choice of $\widehat{n}$ would then be to base it on the quantile $\alpha$ of $p_T(n)$ via the generalized inverse distribution function
\begin{align} \label{eq:n_quantile}
    n_{\alpha} = \inf \left\{ n\in \mathbb{Z}_+,~\text{s.t.}~  \sum_{i=1}^n p_T(i\mid \mathcal{Y}_{1:k}) \geq \alpha \right\}
\end{align}
which would ensure that the precomputed values would be sufficiently many in a fraction $\alpha$ of cases.

\subsubsection{Choosing $\n$ by optimizing sensor energy consumption} \label{sec:choosing_nhat_opt}

A better choice would arguably be to try to choose $\n$ in a way that would minimize the sensor energy consumption. The choice of $\n$ influences how long the sensor needs to spend with the radio turned on in each event period. Let $\Son$ and $\Soff$ denote the sensor radio in state on and off, respectively. As the cost per time unit for $\Son$ is greater than $\Soff$, minimizing sensor energy consumption equals maximizing the time the sensor spends in $\Soff$.

To do this, we assume that the time spent in $\Son$ is $t_c\n$, where $t_c$ roughly represents the time necessary to compute and transmit 1 sample. The duration of the event period is assumed to be $hn$, where $h$ is the sample time. Here we require that $t_c < h$ but \emph{not} that $\n t_c < h$. Hence, the precomputation duration can span multiple time steps, and the sensor can trigger before all precomputations are completed. In practice, this can be realized by streaming a batch of all currently computed, but not yet sent, values before each time step occurs. The expected time spent in $\Soff$ between two events then becomes 
\begin{align}\label{eq:timespentoff}
    T_o(\n) = \mathbb{E} \left[\max{(hn - t_s\n, 0)}\right]
\end{align}
where the probability mass function of $n$ is dependent on $\n$, and assuming stationarity can be obtained as follows.
\begin{align}
    p\left(n = i | \widehat{n}\right) = \begin{cases}
        p_T(i) \qquad &\text{if} \quad 0 < i < \widehat{n} \\
        1 - \sum_{j=1}^{\widehat{n}-1} p_T(j) \qquad &\text{if} \quad i = \widehat{n} \\
        0  \qquad & \text{otherwise}
    \end{cases}
\end{align}
where $p_T(n) = p_T(n\mid \mathcal{Y}_{1:k})$, and $k$ is the time step of the previous event. 

We can express $T_c(\n) = h^{-1}T_o(\n) = \mathbb{E} \left[\max{(n - c\n, 0)}\right]$ where $c = t_s h^{-1}$, which yields the maximization problem 
\begin{align}\label{eq:argmax_problem}
    \widehat{n}^* = \underset{\n  \in \mathbb{Z}_+}{\text{argmax}} ~T_c(\n)
\end{align}
To identify the maximizers of $T_c(\n)$, we will consider its forward difference $\Delta T_c(\n) = T_c(\n+1) - T_c(\n)$ for which they are obtained $\forall \n$ s.t. $\Delta T_c(\n-1) > 0$ and $\Delta T_c(\n) < 0$.
\begin{lemma} \label{lemma:diff}
The forward difference $\Delta T_c(\n)$ is given by
\begin{align}
\begin{split}
    \Delta T_c(\n) &= \alpha(\n) + 1 - \sum^{\n}_{i=1} p_T(i) - c \left(1 - \sum_{i=1}^{i \leq c\n} p_T(i)\right) \\
    \text{where} & \quad \alpha(\n) = \left(c(\n + 1) - \lfloor c(\n+1) \rfloor \right)p_T\left(\lfloor c(\n+1) \rfloor \right) \\
    &\qquad \qquad \cdot \left(\lfloor c(\n+1) \rfloor - \lfloor c\n \rfloor \right)
\end{split}
\end{align}
where $\lfloor \cdot \rfloor$ is the floor function.
\end{lemma}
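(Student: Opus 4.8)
The plan is to compute $\Delta T_c(\n)=T_c(\n+1)-T_c(\n)$ directly from $T_c(\n)=\E{\max(n-c\n,0)}=\sum_{i=1}^{\n}\max(i-c\n,0)\,p(n=i\mid\n)$, exploiting two features of the summand: the $\max$ vanishes for small $i$, and $p(n=i\mid\n)$ coincides with $p_T(i)$ except at the terminal index $i=\n$, where it carries the accumulated mass $1-\sum_{j=1}^{\n-1}p_T(j)$.

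First I would restrict each expectation to its support. Since $\max(i-c\n,0)=0$ exactly when $i\le\lfloor c\n\rfloor$, the sum for $T_c(\n)$ runs effectively from $\lfloor c\n\rfloor+1$ to $\n$, and the one for $T_c(\n+1)$ from $\lfloor c(\n+1)\rfloor+1$ to $\n+1$. Substituting the pmf, each sum splits into an interior part weighted by $p_T(i)$ and a single endpoint term carrying the tail mass. Introducing the shorthand $F(k)=\sum_{j=1}^{k}p_T(j)$ keeps the subsequent bookkeeping compact.

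Next I would form the difference and organise it into three pieces. Piece (i) is the shift of the interior weights when $c\n$ is replaced by $c\n+c$, contributing $-c\,[\,F(\n-1)-F(\lfloor c\n\rfloor)\,]$. Piece (ii) is the endpoint difference, which after expanding $1-F(\n)=1-F(\n-1)-p_T(\n)$ becomes $-p_T(\n)+(1-c)(1-F(\n-1))$. Combining (i) and (ii) and cancelling the terms proportional to $cF(\n-1)$ collapses them, after routine algebra, to $1-\sum_{i=1}^{\n}p_T(i)-c\bigl(1-\sum_{i=1}^{i\le c\n}p_T(i)\bigr)$, which is exactly the floor-free part of the claimed expression.

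Piece (iii), the correction from the moving lower summation limit, is where the difficulty lies. Because $0<c<1$, the floor can rise by at most one as $\n\to\n+1$, so $\lfloor c(\n+1)\rfloor-\lfloor c\n\rfloor\in\{0,1\}$. If it does not move, piece (iii) is zero; if it jumps by one, the single index $i^\ast=\lfloor c(\n+1)\rfloor$ drops out of the positive region of $T_c(\n+1)$, and tracking its leftover weight shows it contributes $\bigl(c(\n+1)-\lfloor c(\n+1)\rfloor\bigr)\,p_T\!\bigl(\lfloor c(\n+1)\rfloor\bigr)$, the fractional part of $c(\n+1)$ times $p_T$ evaluated at that index. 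Writing the unifying jump indicator as $\lfloor c(\n+1)\rfloor-\lfloor c\n\rfloor$ turns this into precisely $\alpha(\n)$, and adding it to the collapsed floor-free part gives the stated formula. The \textbf{main obstacle} is this last bookkeeping: verifying that the residual weight of the departing index $i^\ast$ assembles into the fractional-part factor rather than an off-by-$p_T(i^\ast)$ remainder, which is what forces the specific product structure of $\alpha(\n)$.
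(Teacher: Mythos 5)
Your proposal is correct and follows essentially the same route as the paper's proof: a direct expansion of $T_c(\n+1)-T_c(\n)$ with careful bookkeeping of the terminal probability mass at $i=\n$ versus $i=\n+1$ and of the lower summation limit moving from $\lfloor c\n\rfloor$ to $\lfloor c(\n+1)\rfloor$, which is exactly where the paper also extracts the $\alpha(\n)$ correction. The only difference is organizational: you form the difference first and split it into three pieces, while the paper splits $T_c(\n+1)$ into the $\sum i\,p$ and $\sum c(\n+1)\,p$ parts before recombining.
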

\begin{proof}
See the appendix.
\end{proof}
Considering this $\Delta T_c(\n)$, we can conclude that for a general set of triggering probabilities $\{p_T(i) \geq 0\}_{i\in \mathbb{Z}_+}$ s.t. $\sum p_T = 1$, it will be difficult, if not impossible, to find $\n^*$ beyond simply testing all plausible $\n$ via some type of brute force approach. This is not acceptable in a real setting, as by \eqref{eq:one_step_trig_prob} we cannot obtain $p_T(\n)$ before computing the state estimates up to time step $\n-1$. Instead, some heuristics in choosing $\n$ will need to be applied for the general case. One such heuristic is to simply choose the first maximizer that we happen to come across.  For small values of $c$ we could expect the resulting cost to be similar to the global maximizer which is captured in the theorem below.
\begin{theorem}
    The global maximizer of $T_c(\n)$ occurs at some $\n^*$ for which the triggering probability quantile $\alpha > 1 - c$.
\end{theorem}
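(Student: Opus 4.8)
The plan is to read the conclusion directly off the closed form of the forward difference in Lemma~\ref{lemma:diff}, combined with the characterization of a maximizer stated just before it. First I would invoke that a global maximizer $\n^*$ of $T_c(\n)$ is, in particular, a sign-change point of the forward difference, so that $\Delta T_c(\n^*) < 0$. If the maximum is attained on a plateau I would take $\n^*$ to be its right endpoint, which is the point guaranteed to satisfy the \emph{strict} inequality; this is precisely why the statement only claims the bound for \emph{some} global maximizer. Substituting the expression from Lemma~\ref{lemma:diff} then yields
\begin{equation*}
\alpha(\n^*) + 1 - \sum_{i=1}^{\n^*} p_T(i) - c\left(1 - \sum_{i \leq c\n^*} p_T(i)\right) < 0 .
\end{equation*}

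The core of the argument is then to discard the two terms that only help the inequality. Rearranging gives $1 - \sum_{i=1}^{\n^*} p_T(i) < c\left(1 - \sum_{i \leq c\n^*} p_T(i)\right) - \alpha(\n^*)$, so I would bound the right-hand side from above by $c$. For this I first check that $\alpha(\n^*) \geq 0$: its three factors are the fractional part $c(\n^*+1) - \lfloor c(\n^*+1)\rfloor \in [0,1)$, the probability $p_T(\lfloor c(\n^*+1)\rfloor) \geq 0$, and the floor increment $\lfloor c(\n^*+1)\rfloor - \lfloor c\n^* \rfloor \geq 0$, the last being non-negative because $c>0$ makes $\lfloor c\,\cdot\,\rfloor$ non-decreasing. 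Likewise $1 - \sum_{i\leq c\n^*} p_T(i) \leq 1$ since the $p_T(i)$ are non-negative. Hence the right-hand side is at most $c$, and I obtain $1 - \sum_{i=1}^{\n^*} p_T(i) < c$, i.e. $\sum_{i=1}^{\n^*} p_T(i) > 1 - c$.

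To finish, I would translate this cumulative mass into the quantile language of \eqref{eq:n_quantile}. The quantity $\sum_{i=1}^{\n^*} p_T(i)$ is exactly the largest quantile level $\alpha$ for which $n_\alpha = \n^*$, so $\n^*$ realizes a quantile $\alpha = \sum_{i=1}^{\n^*} p_T(i) > 1 - c$, which is the claim.

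I expect the obstacles to be bookkeeping rather than depth. The two I would be careful about are: (i) justifying the sign of $\alpha(\n)$ cleanly, since it is a product of a fractional part, a point mass, and a floor increment whose non-negativity each rests on $c>0$; and (ii) the existence/plateau technicalities, namely that a finite global maximizer exists at all (which follows from $T_c \geq 0$, $T_c$ being strictly positive for moderate $\n$, and $T_c(\n)\to 0$ as $\n\to\infty$ whenever $p_T$ has a decaying tail) and that among possibly several maximizers one satisfies the strict sign condition $\Delta T_c(\n^*)<0$ needed for the strict bound. A final, purely notational caution is that $\alpha$ denotes both the auxiliary term $\alpha(\n)$ of Lemma~\ref{lemma:diff} and the quantile level; I would keep these two uses distinct throughout.
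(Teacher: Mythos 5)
Your proposal is correct and uses essentially the same argument as the paper: both drop the nonnegative term $\alpha(\n)$ and the nonnegative partial-sum term to obtain the bound $\Delta T_c(\n) \geq 1 - \sum_{i=1}^{\n} p_T(i) - c$, and then identify the resulting condition with the generalized inverse distribution function in \eqref{eq:n_quantile}. The only (immaterial) difference is direction: you evaluate the sign of $\Delta T_c$ at a maximizer, whereas the paper notes the lower bound stays nonnegative for all $\n$ below the $(1-c)$-quantile point and hence no maximizer can precede it; your extra care about plateaus is what the paper's phrasing sidesteps automatically.
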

\begin{proof}
Considering the forward difference in Lemma \ref{lemma:diff}, the value of $c(\n+1) - \lfloor c(\n+1) \rfloor$ in $\alpha(\n)$ is clearly bounded below by $0$. Thus 
\begin{align}
    &\Delta T_c(\n) \geq 1 - \sum^{\n}_{j=1} p_T(j) - c \left(1 - \sum_{j=1}^{j \leq c\n} p_T(j)\right)
\end{align}
Since $c \in (0, 1)$ and $p_T(j)\geq 0~\forall j$, both $\alpha(\n) \geq 0$ and $c\sum_{j=1}^{c\n} p_T(j) \geq 0$, $\Delta T_c(\n)$ can be further bounded below with the following boundary denoted as $\Delta T^S_c(\n)$
\begin{align}
    \Delta T_c(\n) \geq 1 - \sum^{\n}_{j=1} p_T(j) - c = \Delta T_c^S(\n)
\end{align}
As $\Delta T_c^S(\n)$ bounds $\Delta T_c(\n)$ from below for all $\n$, the global maximizer of $T_c(\n)$ can occur first at some 
\begin{align}
    n^*\geq \left(n^S\right)^* &= \inf \left\{ \n^S \in \mathbb{Z}_+,~\text{s.t.}~ 1 - \sum^{\n^S}_{i=1} p_T(i) - c < 0 \right\}
\end{align}
By reordering the condition to $\sum^{\n^S}_{i=1} p_T(i) > 1 - c$, this can be recognize from \eqref{eq:n_quantile} as the generalized inverse distribution function. 
\end{proof}
Without constraining the generality of $\{p_T(i)\}_{i \in \mathbb{Z}_+}$ it becomes difficult to provide performance guarantees on this heuristic. However, all maximizers of $T_c(\n)$ will be located in $\n \in \left( (\n^*)^S, \infty \right)$ where $\sum_{k = (\n^*)^S}^{\infty} p_T(k) < c$. Hence, a small value of $c$ implies that after $\left(n^S\right)^*$, there is not much room for $T_c(\n)$ to dramatically increase. Thus, in these cases, the first maximizer found should give a similar cost as the global maximizer for well-behaved systems.

\section{Computational aspects} \label{sec:comp_aspects}
As state estimation is widely used for real-time control and decision making, it is important to consider the computational performance of the event-based particle filter. Furthermore, if mitigating the closed-loop triggering problem with the precomputation approach, the performance of the heuristic choice of $\n$ as the first local maximizer of $T_c(\n)$ will be improved if the computational time is small. In this section, we will thus discuss the computational aspects of the particle filter for event-based systems. First, the problem of evaluating the no-event likelihood \eqref{eq:event_likelihood} will be addressed. Later, we discuss how the auxiliary particle filter under event-based sampling using the precomputation approach can be efficiently implemented.

\subsection{Evaluating the no-event likelihood density}
For event-based systems, at no event time steps $\gamma_k=0$, the likelihood evaluation when computing the weights for each particle reduces to evaluating the following integral
\begin{align} \label{eq:particle_likelihood}
    p(\mathcal{Y}_k | x_k) \propto \int_{y_k \in H_k} p(y_k|x_k) dy_k
\end{align}
Similarly, if estimating the triggering probability \eqref{eq:one_step_trig_prob}, the following integral needs to be evaluated for each particle
\begin{align} \label{eq:particle_trigg}
    \int_{y_{k+1} \not \in H_{k+1}} p(y_{k+1} |  x_{k+1}) dy_{k+1}
\end{align}
If the dimensionality of the measurements is $m = 1$, this expression can be obtained in closed form if the CDF of $p(y_k|x_k)$ is known. However, if $m > 1$, this integral is in general intractable, which is problematic since it implies that evaluation of the likelihood function must be performed numerically \emph{for each particle}.

In the event-based literature, an approach for tackling the likelihood evaluation has been to use simple Monte Carlo integration \cite{Chitraganti17, Li19},
\begin{align}
    p(\mathcal{Y}_k|X^i_k) \approx \int_{y_k\in H_k} \frac{1}{M}  \sum_j^{M} \delta_{Y^j_k}(y_k) dy_k \quad Y^j_k \sim p(y_k | X^i_k)
\end{align}
for some small $M$. In \cite{Liu18,SenLi19}, this is taken a step further, and $M$ is set to 1 motivated by constraint Bayesian estimation \cite{Shao10}, which essentially reduces the algorithm to acceptance/rejection sampling of particles based on a proposal measurement generated by the particle. This could potentially yield a quick solution to the intractable integrals, but care needs to be taken as the method can suffer from particle depletion. In time steps where the triggering probability is high, the rejection rate of particles will be high and potentially yield a poor posterior approximation. 

We could expect the performance to be closely tied to the size of the event trigger sets $H_{1:T}$, and in turn the communication rate $C_r = \frac{1}{T} \sum_{k=1}^T \gamma_k$ for most well-behaved event-sampled systems. The communication rate is directly related to the triggering probability in each time step as
\begin{align}
    \mathbb{E}(C_r) = \frac{1}{T} \sum_{k=2}^{T}  p(y_{k} \not\in H_{k} \mid  \mathcal{Y}_{1:k-1}) + \frac{1}{T}
\end{align}
assuming that $p(\gamma_1 = 1) = 1$. In each step, we would then expect that a particle in average would have a rejection probability of $p(y_{k} \not\in H_{k} \mid  \mathcal{Y}_{1:k-1})$, and thus on average $1 - p(y_{k} \not\in H_{k} \mid  \mathcal{Y}_{1:k-1})$ of particles will be kept in each step. Thus, in total, the average particle count will be
\begin{align}
    \mathbb{E}(N_r) = \frac{N}{T} \sum_{k=1}^{T-1} \left( 1 - p(y_{k+1} \not\in H_{k+1} \mid  \mathcal{Y}_{1:k} )\right) + \frac{N}{T}
\end{align}
and we can see that, with a low $C_r$, $N_r$ will be close to $N$. A problem occurs in that $p(y_{k} \not\in H_{k} \mid  \mathcal{Y}_{1:k-1})$ might have values close to 1 for certain $k$ even though $C_r$ is low, making it possible for individual time steps to still get a bad posterior approximation. How this distribution of triggering probabilities behaves is specific to the system and event-trigger. But for well-behaved systems with a fitting choice of event-trigger, where a roughly equal part of $p(y_k | \mathcal{Y}_{1:k-1})$ is covered with $H_k$ in each time step, the triggering probabilities would be expected to be situated around the mean. 

In such cases, rejection sampling or using a small $M > 1$, could provide both an adequate performance and a cheap likelihood computation when $m > 1$. If not, the likelihood will likely become a major computational bottleneck. 

\subsection{Implementing the event-based particle filter} \label{sec:implementation}

In its essence, the observer-side implementation of a particle filter under event-based sampling will be very similar to that of a particle filter under periodic sampling. Only a single step that determines $\cY_k$ for the switched likelihood will need to be introduced in the particle propagation. Considering the auxiliary particle filter and assuming that $\hat{p}_N\left(x_{k-1} \mid \cY_{1:k-1} \right) = \sum_{i=1}^N W_{k-1} \delta_{X^i_{k-1}} \left(x_{k-1}\right)$ is given, the posterior approximation $\hat{p}_N\left(x_{k} \mid \cY_{1:k} \right)$ is generated as follows.
\begin{enumerate}
    \item if $\gamma_k=1$ then $\cY_k = y_k$, else generate $\cY_k=H_k$ based on the trigger rule,
    \item Draw resample indices $a^i \sim \mathcal{C}\left( \left\{ V^j_{k-1}\right\}_{j\in1:N} \right)$ where $V^i_k \propto \hat{p}\left(\mathcal{Y}_k \mid X^i_{k-1} \right) W^i_{k-1}$,
    \item Propagate particles as $X^i_{k} \sim q\left(x_k \mid \cY_k, X^{a_i}_{k-1}\right)$,
    \item Weight particles as $W^i_k \propto \frac{p \big( \cY_k \mid  X^i_k \big) p \big( X^i_k \mid  X^{a^i}_{k-1} \big) }{\hat{p} \big( \cY_k \mid  X^{a^i}_{k-1} \big) q \big(X^i_k \mid \cY_k, X^{a^i}_{k-1} \big)}$,
\end{enumerate}
$\forall i \in 1:N$. Setting $\hat{p} \big( \cY_k \mid  x_{k-1} \big) = 1$ returns the nominal particle filter, and further with $q \big(x_k \mid \cY_k, x_{k-1} \big) = \pd{x_k}{x_{k-1}}$ the bootstrap filter. For an open-loop trigger, $H_k$ is obtained solely based on the previous received measurement, and thus step 1) is quick to compute. Instead, using a closed-loop trigger, $H_k$ needs to be computed in some manner. For the local predictor approach, the observer will have to obtain $H_k$ local to the sensor at each $\gamma_k=0$. This is commonly solved by running an identical predictor at the observer.

Using the precomputation or periodic downlink approach, $H_k$ is instead computed from some measure on the one-step prediction of the state distribution $\pd{x_k}{\cY_{1:k-1}}$, e.g. as shown in \eqref{eq:estpredlike} for IBT. Using the BPF, an estimate of $\pd{x_k}{\cY_{1:k-1}}$ can be obtained directly from the particle propagation. If instead the APF is considered, some other way estimating $\pd{x_k}{\cY_{1:k-1}}$ is needed. One such general way is to simply copy the BPF particle propagation, by introducing a second set of particles $\overline{X}^i_k \sim \pd{x_k}{X^{\overline{a}_i}_{k-1}}$ where $\overline{a}^i = \mathcal{C}\left( \left\{ W^j_{k-1}\right\}_{j\in1:N} \right)$ $\forall i \in 1:\overline{N}$.

In the precomputation approach, we at each $\gamma_k = 1$ further need to iterate steps 1)~-~4) assuming $\gamma_{k+1:k+\n} = 0$ until $H_{k+1:k+\n}$ has been computed. If $\n$ is dynamically chosen as presented in Section \ref{sec:precomp_choosing_nhat}, the probability $p_T \left( i \mid \cY_{1:k} \right)$ $\forall i \in 1:\n$ needs to be obtained, but as seen in \eqref{eq:trigger_prob} this can be done \emph{iteratively}. Further, the one step trigger probability estimation \eqref{eq:one_step_particle_approx} can be rewritten as
\begin{align}
    \begin{split} \label{eq:esttrigprobweigths}
    p&(y_{k+1} \not\in H_{k+1} \mid  \mathcal{Y}_{1:k}) \\
	    &\approx \frac{1}{N} \sum_{i=i}^N \left( 1 - \int_{y_{k+1} \in H_{k+1}} p(y_{k+1}|\overline{X}^i_{k+1}) dy_{k+1}\right)
    \end{split}
\end{align}
which allows us for the APF to use our second set of particles $\left\{ \overline{X}^i_{k+1} \right\}_{i\in1:N}$ to estimate the triggering probability via the no-event likelihood. The precomputation approach for generating the set $H_{k+1:k+\n}$ at each $\gamma_k=1$ for the presented approximate fully adapted filter shown in Section \ref{sec:fully_adapted} can be seen in its entirety in Algorithm \ref{alg:iterative}.

\begin{algorithm}[t!]
	\SetAlgoLined
	\SetKwInput{Send}{send}
	\SetKwInput{Generate}{generate}
	\SetKwInput{Set}{set}
    \ForAll{$\gamma_k = 1$, given $\left\{ W^i_{k-1}, X^i_{k-1} \right\}_{i\in1:N}$}{
        $\bm{\mu}_k\left(x_{k-1}\right), \bm{\Sigma}_k\left(x_{k-1}\right) \leftarrow $ e.g. linearization\;
        $q\left(x_k \mid y_k, x_{k-1}\right), \hat{p}\left(y_k \mid x_{k-1}\right) \leftarrow$ \eqref{eq:approx_fullyadapted_gamma1}\;
        $\left\{ W^i_{k}, X^i_{k} \right\}_{i\in1:N} \leftarrow$ \\ \hspace{0.2em} resample/propagate/weight particles\;
    	set $\n=1$, $p_N=1$, $p_T=0$\;
    	\While{$\Delta T_c(\n) \geq 0$ from Lemma \ref{lemma:diff}}{
    	
    	    \tcc{generate trigger bounds and probability}
    	    $\left\{\overline{a}^i \right\}_{i \in 1:N} \leftarrow \mathcal{C}\left( \left\{ W^j_{k-1}\right\}_{j\in1:N} \right)$\;
    	    $\left\{\overline{X}^i_{k+\n} \right\}_{i \in 1:N} \leftarrow \pd{x_k}{X^{\overline{a}_i}_{k+\n-1}}$\;
    	    $H_{k+\n} \leftarrow \textbf{TriggerRule}\left(\left\{\overline{X}^i_{k+\n} \right\}_{i \in 1:N}\right)$\;
    	    $\hat{p}(\gamma_{k+\n} = 1) \leftarrow$ \eqref{eq:esttrigprobweigths} and $\left\{\overline{X}^i_{k+\n} \right\}_{i \in 1:N}$\;
    	    $\hat{p}_T := \hat{p}(\gamma_{k+\n} = 1) \cdot p_N$\;
    		$p_N := p_N \cdot (1 - \hat{p}(\gamma_{k+\n} = 1))$\; 
    	
    	    \vspace{1em}
    	    \tcc{generate posterior approx. assuming no trig}
    	    $\bm{\mu}_{k+\n}\left(x_{k+\n-1}\right), \bm{\Sigma}_{k+\n}\left(x_{k+\n-1}\right) \leftarrow $ e.g. linearization\;
    	    $q\left(x_{k+\n} \mid \cY_{k+\n}, x_{k+\n-1}\right) \leftarrow$ \eqref{eq:approx_optprop_gamma0}\;
    	    $\hat{p}\left(\cY_{k+\n} \mid x_{k+\n-1} \right) \leftarrow$ \eqref{eq:approx_predlike_gamma0}\;
    	    $\left\{ W^i_{k+\n}, X^i_{k+\n} \right\}_{i\in1:N} \leftarrow$ resample/propagate/weight particles\;
    	    \vspace{1em}
    		$\n := \n + 1 $\;
    	}
    	\Send{$H_{k:k+\n}$ to sensor}
    }
	\caption{Implementation of the precomputation approach for some closed-loop triggering rule and the approximate fully adapted filter.}
	\label{alg:iterative}
\end{algorithm}

\section{Evaluation} \label{sec:evaluation}

\begin{figure*}[!t]
	\centering
	\begin{minipage}{\linewidth}
	    \centering
    	\begin{subfigure}[t]{0.33\linewidth}
     		\centering
     		\begin{tikzpicture}[trim axis left]

\begin{axis}[%
width=0.8\columnwidth,
height=0.60\columnwidth,
scale only axis,
separate axis lines,
trim axis left,
%axis y line*=left,
%xmin=0,
%xmax=100,
%ymin=0,
%ymax=15,
%ytick={0, 0.02, 0.04, 0.06, 0.08, 0.10, 0.12, 0.14},
yticklabel style={
        /pgf/number format/fixed,
        /pgf/number format/precision=5
},
scaled y ticks=false,
xtick={0, 0.25, 0.5, 0.75, 1},
%xlabel={Arrival rate / server (1/s)},
%compat=1.4,
%xlabel={Effective particle size $N_{\text{eff}}$},
%ylabel={$ \sum_k \mathcal{E}_{\text{RMSE}}(k) $},
grid style={dashed,black!20},
grid=major,
legend style={draw, fill=white, font=\footnotesize},
legend pos = north east,
legend columns=1,
legend image post style={scale=2.0},
xlabel={Communication rate $C_r$},
ylabel={Cross-entropy $\varepsilon_{ce}$},
ylabel shift = -0.25 em
]

\addplot [color=blue,only marks, thick, mark=*, mark options={scale=0.5, opacity=0.5}] table [x=X_bpf, y=Y_bpf, col sep=comma] {graphics/data/comparefiltertype_all.csv};
\addlegendentry{BPF (i)};
\addplot [color=blue, only marks, thick, mark=x, mark options={scale=1}] table [x=X_bpf_MC, y=Y_bpf_MC, col sep=comma] {graphics/data/comparefiltertype_all.csv};
\addlegendentry{BPF (ii)};

\addplot [color=red, only marks, thick, mark=*, mark options={scale=0.5, opacity=0.5}] table [x=X_apf, y=Y_apf, col sep=comma] {graphics/data/comparefiltertype_all.csv};
\addlegendentry{APF (i)};
\addplot [color=red, only marks, thick, mark=x, mark options={scale=1}] table [x=X_apf_MC, y=Y_apf_MC, col sep=comma] {graphics/data/comparefiltertype_all.csv};
\addlegendentry{APF (ii)};

% \addplot [color=blue, only marks, thick, mark=*, mark options={scale=0.5}] table [x=X_bpf, y=Y_bpf, col sep=comma] {graphics/data/comparefiltertype_all.csv};

% \addplot [color=red, only marks, thick, mark=square*, mark options={scale=0.5}] table [x=X_apf, y=Y_apf, col sep=comma] {graphics/data/comparefiltertype_all.csv};

\end{axis}

\end{tikzpicture}%
     		\caption{All time steps.}
     		\label{fig:comparefiltertype_all}
    	\end{subfigure}%
    	\begin{subfigure}[t]{0.33\linewidth}
            \centering
     		\begin{tikzpicture}[trim axis left]

\begin{axis}[%
width=0.8\columnwidth,
height=0.60\columnwidth,
scale only axis,
separate axis lines,
trim axis left,
%axis y line*=left,
%xmin=0,
%xmax=100,
%ymin=0,
%ymax=15,
%ytick={0, 0.02, 0.04, 0.06, 0.08, 0.10, 0.12, 0.14},
yticklabel style={
        /pgf/number format/fixed,
        /pgf/number format/precision=5
},
scaled y ticks=false,
xtick={0, 0.25, 0.5, 0.75, 1},
%xlabel={Arrival rate / server (1/s)},
%compat=1.4,
%xlabel={Effective particle size $N_{\text{eff}}$},
%ylabel={$ \sum_k \mathcal{E}_{\text{RMSE}}(k) $},
grid style={dashed,black!20},
grid=major,
%legend style={draw, fill=white, font=\footnotesize},
%legend pos = north east,
%legend columns=1,
%legend image post style={scale=0.5},
%title={A test plot 2},
xlabel={Communication rate $C_r$},
%ylabel={Cross-entropy},
%ylabel shift = -0.25 em
]

\addplot [color=blue, only marks, thick, mark=*, mark options={scale=0.5, opacity=0.5}] table [x=X_bpf, y=Y_bpf, col sep=comma] {graphics/data/comparefiltertype_trig.csv};
\addplot [color=blue,only marks, thick, mark=x, mark options={scale=1}] table [x=X_bpf_MC, y=Y_bpf_MC, col sep=comma] {graphics/data/comparefiltertype_trig.csv};

\addplot [color=red, only marks, thick, mark=*, mark options={scale=0.5, opacity=0.5}] table [x=X_apf, y=Y_apf, col sep=comma] {graphics/data/comparefiltertype_trig.csv};
\addplot [color=red, only marks, thick, mark=x, mark options={scale=1}] table [x=X_apf_MC, y=Y_apf_MC, col sep=comma] {graphics/data/comparefiltertype_trig.csv};

% \addplot [color=blue, only marks, thick, mark=*, mark options={scale=0.5}] table [x=X_bpf, y=Y_bpf, col sep=comma] {graphics/data/comparefiltertype_gamma1.csv};

% \addplot [color=red, only marks, thick, mark=square*, mark options={scale=0.5}] table [x=X_apf, y=Y_apf, col sep=comma] {graphics/data/comparefiltertype_gamma1.csv};

\end{axis}

\end{tikzpicture}%
     		\caption{Only time steps where $\gamma_k=1$.}
     		\label{fig:comparefiltertypes_trig}
    	\end{subfigure}%
		\begin{subfigure}[t]{0.33\linewidth}
     	    \centering
     		\begin{tikzpicture}[trim axis left]

\begin{axis}[%
width=0.8\columnwidth,
height=0.60\columnwidth,
scale only axis,
separate axis lines,
trim axis left,
%axis y line*=left,
%xmin=0,
%xmax=100,
%ymin=0,
%ymax=15,
%ytick={0, 0.02, 0.04, 0.06, 0.08, 0.10, 0.12, 0.14},
yticklabel style={
        /pgf/number format/fixed,
        /pgf/number format/precision=5
},
scaled y ticks=false,
xtick={0, 0.25, 0.5, 0.75, 1},
%xlabel={Arrival rate / server (1/s)},
%compat=1.4,
%xlabel={Effective particle size $N_{\text{eff}}$},
%ylabel={$ \sum_k \mathcal{E}_{\text{RMSE}}(k) $},
grid style={dashed,black!20},
grid=major,
%legend style={draw, fill=white, font=\footnotesize},
%legend pos = north east,
%legend columns=1,
%legend image post style={scale=0.5},
%title={A test plot 2},
xlabel={Communication rate $C_r$},
%ylabel={Cross-entropy},
%ylabel shift = -0.25 em
]

\addplot [color=blue, only marks, thick, mark=*, mark options={scale=0.5, opacity=0.5}] table [x=X_bpf, y=Y_bpf, col sep=comma] {graphics/data/comparefiltertype_noTrig.csv};

\addplot [color=blue,only marks, thick, mark=x, mark options={scale=1, opacity=1}] table [x=X_bpf_MC, y=Y_bpf_MC, col sep=comma] {graphics/data/comparefiltertype_noTrig.csv};

\addplot [color=red, only marks, thick, mark=*, mark options={scale=0.5, opacity=0.5}] table [x=X_apf, y=Y_apf, col sep=comma] {graphics/data/comparefiltertype_noTrig.csv};

\addplot [color=red, only marks, thick, mark=x, mark options={scale=1, opacity=1}] table [x=X_apf_MC, y=Y_apf_MC, col sep=comma] {graphics/data/comparefiltertype_noTrig.csv};

% \addplot [color=blue, only marks, thick, mark=*, mark options={scale=0.5}] table [x=X_apf, y=Y_apf, col sep=comma] {graphics/data/comparefiltertype_noTrigMC.csv};

% \addplot [color=red, only marks, thick, mark=square*, mark options={scale=0.5}] table [x=X_apf_MC, y=Y_apf_MC, col sep=comma] {graphics/data/comparefiltertype_noTrigMC.csv};

% \addplot [color=blue, only marks, thick, mark=*, mark options={scale=0.5}] table [x=X_bpf, y=Y_bpf, col sep=comma] {graphics/data/comparefiltertype_noTrigMC.csv};

% \addplot [color=red, only marks, thick, mark=square*, mark options={scale=0.5}] table [x=X_bpf_MC, y=Y_bpf_MC, col sep=comma] {graphics/data/comparefiltertype_noTrigMC.csv};

% \addplot [color=blue, only marks, thick, mark=*, mark options={scale=0.5}] table [x=X_apf, y=Y_apf, col sep=comma] {graphics/data/comparefiltertype_noTrigMC.csv};

% \addplot [color=red, only marks, thick, mark=square*, mark options={scale=0.5}] table [x=X_apf_MC, y=Y_apf_MC, col sep=comma] {graphics/data/comparefiltertype_noTrigMC.csv};

\end{axis}

\end{tikzpicture}%
     		\caption{Only time steps where $\gamma_k=0$.}
     		\label{fig:comparefiltertypes_noTrig}
    	\end{subfigure}
    	\smallskip
    	\caption{Cross-entropy at different $\Delta$ values for BPF and APF, using either (i) analytic likelihood evaluation or (ii) rejection sampling at $N=100$.}
    	\label{fig:CE_comparsion_D}
	\end{minipage}
	\begin{minipage}{\linewidth}
	    \centering
    	\begin{subfigure}[t]{0.33\linewidth}
        	\centering
		    \begin{tikzpicture}[trim axis left]

\begin{axis}[%
width=0.8\columnwidth,
height=0.60\columnwidth,
scale only axis,
separate axis lines,
trim axis left,
%axis y line*=left,
%xmin=0,
%xmax=100,
% ymin=0.6,
% ymax=3,
%ytick={0, 0.02, 0.04, 0.06, 0.08, 0.10, 0.12, 0.14},
yticklabel style={
        /pgf/number format/fixed,
        /pgf/number format/precision=5
},
scaled y ticks=false,
xtick={0, 100, 200, 300, 400},
%xlabel={Arrival rate / server (1/s)},
%compat=1.4,
%xlabel={Effective particle size $N_{\text{eff}}$},
%ylabel={$ \sum_k \mathcal{E}_{\text{RMSE}}(k) $},
grid style={dashed,black!20},
grid=major,
legend style={draw, fill=white, font=\footnotesize},
legend pos = north east,
legend columns=1,
legend image post style={scale=1.0},
xlabel={Number of particles $N$},
ylabel={Cross-entropy $\varepsilon_{ce}$},
ylabel shift = -0.25 em
]

\addplot [color=blue, thick, mark=*, mark options={solid, scale=1.5, draw=black}] table [x=X, y=Y_bpf, col sep=comma] {graphics/data/comparefiltertype_N_all.csv};
\addlegendentry{BPF (i)};
\addplot [color=blue, dashed, thick, mark=x, mark options={solid, scale=1.5}] table [x=X, y=Y_bpf_MC, col sep=comma] {graphics/data/comparefiltertype_N_all.csv};
\addlegendentry{BPF (ii)};

\addplot [color=red, thick, mark=*, mark options={solid, scale=1.5, draw=black}] table [x=X, y=Y_apf, col sep=comma] {graphics/data/comparefiltertype_N_all.csv};
\addlegendentry{APF (i)};
\addplot [color=red, dashed, thick, mark=x, mark options={solid, scale=1.5}] table [x=X, y=Y_apf_MC, col sep=comma] {graphics/data/comparefiltertype_N_all.csv};
\addlegendentry{APF (ii)};

% \addplot [color=blue, only marks, thick, mark=*, mark options={scale=0.5}] table [x=X_bpf, y=Y_bpf, col sep=comma] {graphics/data/comparefiltertype_all.csv};

% \addplot [color=red, only marks, thick, mark=square*, mark options={scale=0.5}] table [x=X_apf, y=Y_apf, col sep=comma] {graphics/data/comparefiltertype_all.csv};

\end{axis}

\end{tikzpicture}%
		    \caption{All time steps.}
		    \label{fig:comparefiltertypes_N_all}
    	\end{subfigure}%
    	\begin{subfigure}[t]{0.33\linewidth}
            \centering
            \begin{tikzpicture}[trim axis left]

\begin{axis}[%
width=0.8\columnwidth,
height=0.60\columnwidth,
scale only axis,
separate axis lines,
trim axis left,
%axis y line*=left,
%xmin=0,
%xmax=100,
%ymin=0,
%ymax=15,
ytick={1, 1.5, 2, 2.5},
yticklabel style={
        /pgf/number format/fixed,
        /pgf/number format/precision=5
},
scaled y ticks=false,
xtick={0, 100, 200, 300, 400},
%xlabel={Arrival rate / server (1/s)},
%compat=1.4,
%xlabel={Effective particle size $N_{\text{eff}}$},
%ylabel={$ \sum_k \mathcal{E}_{\text{RMSE}}(k) $},
grid style={dashed,black!20},
grid=major,
%legend style={draw, fill=white, font=\footnotesize},
%legend pos = north east,
%legend columns=1,
%legend image post style={scale=0.5},
%title={A test plot 2},
xlabel={Number of particles $N$},
ylabel shift = -0.25 em
]

\addplot [color=blue, thick, mark=*, mark options={solid, scale=1.5, draw=black}] table [x=X, y=Y_bpf, col sep=comma] {graphics/data/comparefiltertype_N_trig.csv};
\addplot [color=blue, dashed, thick, mark=x, mark options={solid, scale=1.5}] table [x=X, y=Y_bpf_MC, col sep=comma] {graphics/data/comparefiltertype_N_trig.csv};

\addplot [color=red, thick, mark=*, mark options={solid, scale=1.5, draw=black}] table [x=X, y=Y_apf, col sep=comma] {graphics/data/comparefiltertype_N_trig.csv};
\addplot [color=red, dashed, thick, mark=x, mark options={solid, scale=1.5}] table [x=X, y=Y_apf_MC, col sep=comma] {graphics/data/comparefiltertype_N_trig.csv};

% \addplot [color=blue, only marks, thick, mark=*, mark options={scale=0.5}] table [x=X_bpf, y=Y_bpf, col sep=comma] {graphics/data/comparefiltertype_all.csv};

% \addplot [color=red, only marks, thick, mark=square*, mark options={scale=0.5}] table [x=X_apf, y=Y_apf, col sep=comma] {graphics/data/comparefiltertype_all.csv};

\end{axis}

\end{tikzpicture}%
            \caption{Only time steps where $\gamma_k=1$.}
            \label{fig:comparefiltertypes_N_trig}
    	\end{subfigure}%
		\begin{subfigure}[t]{0.33\linewidth}
      	    \centering
      		\begin{tikzpicture}[trim axis left]

\begin{axis}[%
width=0.8\columnwidth,
height=0.60\columnwidth,
scale only axis,
separate axis lines,
trim axis left,
%axis y line*=left,
%xmin=0,
%xmax=100,
% ymin=0.6,
% ymax=3,
ytick={1.6, 1.65, 1.7, 1.75},
yticklabel style={
        /pgf/number format/fixed,
        /pgf/number format/precision=5
},
scaled y ticks=false,
xtick={0, 100, 200, 300, 400},
%xlabel={Arrival rate / server (1/s)},
%compat=1.4,
%xlabel={Effective particle size $N_{\text{eff}}$},
%ylabel={$ \sum_k \mathcal{E}_{\text{RMSE}}(k) $},
grid style={dashed,black!20},
grid=major,
%legend style={draw, fill=white, font=\footnotesize},
%legend pos = north east,
%legend columns=1,
%legend image post style={scale=0.5},
%title={A test plot 2},
xlabel={Number of particles $N$},
ylabel shift = -0.25 em
]

\addplot [color=blue, thick, mark=*, mark options={solid, scale=1.5, draw=black}] table [x=X, y=Y_bpf, col sep=comma] {graphics/data/comparefiltertype_N_noTrig.csv};
\addplot [color=blue, dashed, thick, mark=x, mark options={solid, scale=1.5}] table [x=X, y=Y_bpf_MC, col sep=comma] {graphics/data/comparefiltertype_N_noTrig.csv};

\addplot [color=red, thick, mark=*, mark options={solid, scale=1.5, draw=black}] table [x=X, y=Y_apf, col sep=comma] {graphics/data/comparefiltertype_N_noTrig.csv};
\addplot [color=red, dashed, thick, mark=x, mark options={solid, scale=1.5}] table [x=X, y=Y_apf_MC, col sep=comma] {graphics/data/comparefiltertype_N_noTrig.csv};

% \addplot [color=blue, only marks, thick, mark=*, mark options={scale=0.5}] table [x=X_bpf, y=Y_bpf, col sep=comma] {graphics/data/comparefiltertype_all.csv};

% \addplot [color=red, only marks, thick, mark=square*, mark options={scale=0.5}] table [x=X_apf, y=Y_apf, col sep=comma] {graphics/data/comparefiltertype_all.csv};

\end{axis}

\end{tikzpicture}%
      		\caption{Only time steps where $\gamma_k=0$.}
      		\label{fig:comparefiltertypes_N_noTrig}
    	\end{subfigure}
    	\smallskip
    	\caption{Cross-entropy at different $N$ for BPF and APF, using either (i) analytic likelihood evaluation or (ii) rejection sampling for $\Delta = 2.5$, equating a communication rate of $C_r \approx 0.25$.}
    	\label{fig:CE_comparsion_N}
	\end{minipage}
	\medskip\par
	\noindent\makebox[\linewidth]{\rule{\linewidth}{0.4pt}}\par\medskip
\end{figure*}

\begin{figure*}[!t]
	\centering
    \begin{minipage}{\linewidth}
	\centering
	    \begin{subfigure}[t]{0.45\linewidth}
    	    \centering
    	    \begin{tikzpicture}[trim axis left]

\begin{axis}[%
width=0.8\columnwidth,
height=0.40\columnwidth,
scale only axis,
separate axis lines,
trim axis left,
%axis y line*=left,
%xmin=0,
%xmax=100,
%ymin=-10,
%ymax=10,
ytick={0, 0.05, 0.1, 0.15},
yticklabel style={
        /pgf/number format/fixed,
        /pgf/number format/precision=5
},
%xtick={0, 0.5, 1, 1.5, 2, 2.5, 3, 3.5},
%xlabel={Arrival rate / server (1/s)},
%compat=1.4,
ylabel={Trigger probability $p_T(n)$},
xlabel={Time step $n$},
grid style={dashed,black!20},
grid=major,
legend style={draw, fill=white, font=\footnotesize},
legend pos = north east,
legend columns=1,
legend image post style={scale=1.0}
]

\addplot [color=blue, thick, const plot mark left] table [x=X, y=Y_apf, col sep=comma] {graphics/data/triggerest_trigprob.csv};
\addlegendentry{$\hat{p}_{\text{PF}}(n)$};

\addplot [color=red, thick, const plot mark left] table [x=X, y=Y_mc, col sep=comma] {graphics/data/triggerest_trigprob.csv};
\addlegendentry{$\hat{p}_{\text{MC}}(n)$};

\addplot [name path=upper, const plot mark left, draw=none, forget plot] table [x=X, y=Y_apf_U, col sep=comma] {graphics/data/triggerest_trigprob.csv};
\addplot [name path=lower, const plot mark left, draw=none, forget plot] table [x=X, y=Y_apf_L, col sep=comma] {graphics/data/triggerest_trigprob.csv};
\addplot [color=blue, fill opacity=0.25, area legend, const plot mark left] 
    fill between [of = lower and upper]; 
%\addlegendentry{$H_k$};

\addplot [name path=upper, const plot mark left, draw=none, forget plot] table [x=X, y=Y_mc_U, col sep=comma] {graphics/data/triggerest_trigprob.csv};
\addplot [name path=lower, const plot mark left, draw=none, forget plot] table [x=X, y=Y_mc_L, col sep=comma] {graphics/data/triggerest_trigprob.csv};
\addplot [color=red, fill opacity=0.25, area legend, const plot mark left] 
    fill between [of = lower and upper]; 
%\addlegendentry{$H_k$};

\end{axis}

\end{tikzpicture}%
    	    \caption{Comparsion between the APF and the naive MC approach with $N = 400$ and $\Delta = 7.5$. Shaded area shows 90\% of data.}
    	    \label{fig:est_trigger_prob_t}
        \end{subfigure}%
        \begin{subfigure}[t]{0.45\linewidth}
    	   	\centering
    		\begin{tikzpicture}[trim axis left]

\begin{axis}[%
width=0.8\columnwidth,
height=0.40\columnwidth,
scale only axis,
separate axis lines,
trim axis left,
%axis y line*=left,
%xmin=0,
%xmax=100,
%ymin=-10,
%ymax=10,
%ytick={0, 0.02, 0.04, 0.06, 0.08, 0.10, 0.12, 0.14},
%yticklabel style={
%        /pgf/number format/fixed,
%        /pgf/number format/precision=5
%},
%scaled y ticks=false,
%xtick={0, 0.5, 1, 1.5, 2, 2.5, 3, 3.5},
%xlabel={Arrival rate / server (1/s)},
%compat=1.4,
%xlabel={Effective particle size $N_{\text{eff}}$},
%ylabel={$ \sum_k \mathcal{E}_{\text{RMSE}}(k) $},
grid style={dashed,black!20},
grid=major,
legend style={draw, fill=white, font=\footnotesize},
legend pos = north east,
legend columns=1,
legend image post style={scale=1.0},
xlabel={Number of particles $N$},
ylabel={RMSE},
ylabel shift = -0.25 em
]

\addplot [color=blue, thick, mark=*, mark options={solid,scale=1.5, draw=black}] table [x=X, y=Y_bpf_2, col sep=comma] {graphics/data/triggerest_erroverN.csv};
\addlegendentry{BPF $\Delta = 2.5$};
\addplot [color=blue, dashed, thick, mark=square*, mark options={solid,scale=1.5, draw=black}] table [x=X, y=Y_apf_2, col sep=comma] {graphics/data/triggerest_erroverN.csv};
\addlegendentry{APF $\Delta = 2.5$};

\addplot [color=red, thick, mark=*, mark options={solid,scale=1.5, draw=black}] table [x=X, y=Y_bpf_7, col sep=comma] {graphics/data/triggerest_erroverN.csv};
\addlegendentry{BPF $\Delta = 7.5$};
\addplot [color=red, dashed, thick, mark=square*, mark options={solid,scale=1.5, draw=black}] table [x=X, y=Y_apf_7, col sep=comma] {graphics/data/triggerest_erroverN.csv};
\addlegendentry{APF $\Delta = 7.5$};

\end{axis}

\end{tikzpicture}%
    		\caption{RMSE between the PF and the naive MC approach.}
    		\label{fig:est_trigger_prob_N}
        \end{subfigure}
    	\smallskip
    	\caption{Comparison between the trigger probability estimation from either the naive Monte Carlo approach or via the particle filter over a short time series.}
    	\label{fig:est_trigger_prob}
	\end{minipage}
    \begin{minipage}{\linewidth}
	    \centering
        \begin{subfigure}[t]{0.45\linewidth}
            \centering
    		\begin{tikzpicture}[trim axis left]

\begin{axis}[%
width=0.8\columnwidth,
height=0.40\columnwidth,
scale only axis,
separate axis lines,
trim axis left,
%axis y line*=left,
%xmin=0,
%xmax=100,
ymin=0,
ymax=3,
ytick={0, 0.5, 1.0, 1.5, 2.0, 2.5, 3.0},
%domain=0.0001:0.0015,
% yticklabel style={
%         /pgf/number format/fixed,
%         /pgf/number format/precision=5
% },
%scaled y ticks=false,
%xtick={0, 0.5, 1, 1.5, 2, 2.5, 3, 3.5},
%xlabel={Arrival rate / server (1/s)},
%compat=1.4,
%xlabel={Effective particle size $N_{\text{eff}}$},
%ylabel={$ \sum_k \mathcal{E}_{\text{RMSE}}(k) $},
grid style={dashed,black!20},
grid=major,
legend style={draw, fill=white, font=\footnotesize},
legend pos = north east,
legend columns=1,
legend image post style={scale=1.0},
xlabel={Time step $n$},
ylabel={$T_c(n)$},
ylabel shift = -0.25 em
]

\addplot [color=blue, thick] table [x=X, y=Tc_1, col sep=comma] {graphics/data/Tc_delta2.csv};
\addlegendentry{$c=0.05$};
\addplot [color=blue, dashed, thick, forget plot] table [x=X, y=Tc_1_apf, col sep=comma] {graphics/data/Tc_delta2.csv};
\addplot [color=blue, only marks, thick, mark=square*, mark options={scale=1.5, draw=black}, forget plot] table [x=X, y=Tc_1_nopt, col sep=comma] {graphics/data/Tc_delta2.csv};
\addplot [color=blue, only marks, thick, mark=*, mark options={scale=1.5, draw=black}, forget plot] table [x=X, y=Tc_1_nsh, col sep=comma] {graphics/data/Tc_delta2.csv};
\addplot [color=blue, only marks, thick, mark=triangle*, mark options={scale=1.5, draw=black}, forget plot] table [x=X, y=Tc_1_nh_apf, col sep=comma] {graphics/data/Tc_delta2.csv};

\addplot [color=red] table [x=X, y=Tc_2, col sep=comma] {graphics/data/Tc_delta2.csv};
\addlegendentry{$c=0.1$};
\addplot [color=red, dashed, thick, forget plot] table [x=X, y=Tc_2_apf, col sep=comma] {graphics/data/Tc_delta2.csv};
\addplot [color=red, only marks, thick, mark=square*, mark options={scale=1.5, draw=black}, forget plot] table [x=X, y=Tc_2_nopt, col sep=comma] {graphics/data/Tc_delta2.csv};
\addplot [color=red, only marks, thick, mark=*, mark options={scale=1.5, draw=black}, forget plot] table [x=X, y=Tc_2_nsh, col sep=comma] {graphics/data/Tc_delta2.csv};
\addplot [color=red, only marks, thick, mark=triangle*, mark options={scale=1.5, draw=black}, forget plot] table [x=X, y=Tc_2_nh_apf, col sep=comma] {graphics/data/Tc_delta2.csv};

\addplot [color=green] table [x=X, y=Tc_3, col sep=comma] {graphics/data/Tc_delta2.csv};
\addlegendentry{$c=0.25$};
\addplot [color=green, dashed, thick, forget plot] table [x=X, y=Tc_3_apf, col sep=comma] {graphics/data/Tc_delta2.csv};
\addplot [color=green, only marks, thick, mark=square*, mark options={scale=1.5, draw=black}, forget plot] table [x=X, y=Tc_3_nopt, col sep=comma] {graphics/data/Tc_delta2.csv};
\addplot [color=green, only marks, thick, mark=*, mark options={scale=1.5, draw=black}, forget plot] table [x=X, y=Tc_3_nsh, col sep=comma] {graphics/data/Tc_delta2.csv};
\addplot [color=green, only marks, thick, mark=triangle*, mark options={scale=1.5, draw=black}, forget plot] table [x=X, y=Tc_3_nh_apf, col sep=comma] {graphics/data/Tc_delta2.csv};

\end{axis}

\end{tikzpicture}%
    		\caption{$\Delta = 2.5$.}
            \label{fig:Tc_deltalow}
        \end{subfigure}%
         \begin{subfigure}[t]{0.45\linewidth}
            \centering
    		\begin{tikzpicture}[trim axis left]

\begin{axis}[%
width=0.8\columnwidth,
height=0.40\columnwidth,
scale only axis,
separate axis lines,
trim axis left,
%axis y line*=left,
%xmin=0,
%xmax=100,
ymin=0,
ymax=15,
ytick={0, 2.5, 5.0, 7.5, 10.0, 12.5, 15.0},
%domain=0.0001:0.0015,
% yticklabel style={
%         /pgf/number format/fixed,
%         /pgf/number format/precision=5
% },
%scaled y ticks=false,
%xtick={0, 0.5, 1, 1.5, 2, 2.5, 3, 3.5},
%xlabel={Arrival rate / server (1/s)},
%compat=1.4,
%xlabel={Effective particle size $N_{\text{eff}}$},
%ylabel={$ \sum_k \mathcal{E}_{\text{RMSE}}(k) $},
grid style={dashed,black!20},
grid=major,
%legend style={draw, fill=white, font=\footnotesize},
%legend pos = north east,
%legend columns=1,
%legend image post style={scale=0.5},
%title={A test plot 2},
xlabel={Time step $n$},
ylabel shift = -0.25 em
]

\addplot [color=blue, thick] table [x=X, y=Tc_1, col sep=comma] {graphics/data/Tc_delta7.csv};
\addplot [color=blue, dashed, thick] table [x=X, y=Tc_1_apf, col sep=comma] {graphics/data/Tc_delta7.csv};
% %\addlegendentry{$c=0.05$};
\addplot [color=blue, only marks, thick, mark=square*, mark options={scale=1.5, draw=black}] table [x=X, y=Tc_1_nopt, col sep=comma] {graphics/data/Tc_delta7.csv};
\addplot [color=blue, only marks, thick, mark=*, mark options={scale=1.5, draw=black}] table [x=X, y=Tc_1_nsh, col sep=comma] {graphics/data/Tc_delta7.csv};
\addplot [color=blue, only marks, thick, mark=triangle*, mark options={scale=1.5, draw=black}] table [x=X, y=Tc_1_nh_apf, col sep=comma] {graphics/data/Tc_delta7.csv};

\addplot [color=red] table [x=X, y=Tc_2, col sep=comma] {graphics/data/Tc_delta7.csv};
\addplot [color=red, dashed, thick] table [x=X, y=Tc_2_apf, col sep=comma] {graphics/data/Tc_delta7.csv};
% %\addlegendentry{$c=0.1$};
\addplot [color=red, only marks, thick, mark=square*, mark options={scale=1.5, draw=black}] table [x=X, y=Tc_2_nopt, col sep=comma] {graphics/data/Tc_delta7.csv};
\addplot [color=red, only marks, thick, mark=*, mark options={scale=1.5, draw=black}] table [x=X, y=Tc_2_nsh, col sep=comma] {graphics/data/Tc_delta7.csv};
\addplot [color=red, only marks, thick, mark=triangle*, mark options={scale=1.5, draw=black}] table [x=X, y=Tc_2_nh_apf, col sep=comma] {graphics/data/Tc_delta7.csv};

\addplot [color=green] table [x=X, y=Tc_3, col sep=comma] {graphics/data/Tc_delta7.csv};
\addplot [color=green, dashed, thick] table [x=X, y=Tc_3_apf, col sep=comma] {graphics/data/Tc_delta7.csv};
% %\addlegendentry{$c=0.25$};
\addplot [color=green, only marks, thick, mark=square*, mark options={scale=1.5, draw=black}] table [x=X, y=Tc_3_nopt, col sep=comma] {graphics/data/Tc_delta7.csv};
\addplot [color=green, only marks, thick, mark=*, mark options={scale=1.5, draw=black}] table [x=X, y=Tc_3_nsh, col sep=comma] {graphics/data/Tc_delta7.csv};
\addplot [color=green, only marks, thick, mark=triangle*, mark options={scale=1.5, draw=black}] table [x=X, y=Tc_3_nh_apf, col sep=comma] {graphics/data/Tc_delta7.csv};

\end{axis}

\end{tikzpicture}%
    		\caption{$\Delta = 7.5$.}
    		\label{fig:Tc_deltahigh}
        \end{subfigure}
    	\smallskip
    	\caption{The value of $T_c(n)$ according to both estimation methods of $p_T(n)$ with $N=400$ for some values of $c$. The solid line shows $T_c(n)$ evaluated with $\hat{p}_{\text{MC}}(n)$, whereas the dashed line uses $\hat{p}_{\text{PF}}(n)$. The circle shows the $1-c$ quantile, the square the true maximizer and the triangle the maximizer according to the heuristic choice of the first $n$ s.t. $\Delta T_c(n) < 0$.}
    	\label{fig:Tc_comparsion}
	\end{minipage}
	\medskip\par
	\noindent\makebox[\linewidth]{\rule{\linewidth}{0.4pt}}\par\medskip
\end{figure*}

To form an intuition about the performance of the presented improvements, we performed two larger simulation studies on the classical multimodal nonlinear system commonly used for benchmarking particle filters \cite{Li19, Chitraganti17, Gustafsson10, Gordon93}:
\begin{align} \label{eq:nonlin_example_system}
    \begin{split}
        x_{k+1} &= \frac{x_{k}}{2} + \frac{25x_{k}}{1 + x_{k}^2} + 8\cos(1.2k) + w_k \\
        y_k &= \frac{x_k^2}{20} + v_k \\
        \text{where}&\quad w_k \sim \mathcal{N}(0, 1),~ v_k \sim \mathcal{N}(0, 0.1) 
    \end{split}
\end{align}
using event-based sampling with the closed-loop IBT as described in Section \ref{sec:background_ebse}. The studies tested both how the auxiliary particle filter compared to the bootstrap particle filter in an event-based system, how well the filters managed to estimate the triggering probabilities, and how the acceptance/rejection strategy for the no-event likelihoods affected the estimation quality. For the APF, the approximate fully adapted filter described in Section~\ref{sec:fully_adapted} with linearization and with a uniformly discretized $H_k$ with $D=3$ and $V^l_k = \Delta / D$ was used. 

The simulation studies where performed in the \texttt{Julia}\footnote{\url{https://julialang.org/}} programming language, and is available here\footnote{\url{https://github.com/JohanRuuskanen/eventbasedparticlefiltering/tree/preprint}}.

\subsection{Evaluating performance of the event-based particle filter} \label{sec:eval_pf_performace}

In the first simulation study, we examined how well the APF performed compared to the standard BPF in terms of the trade-off between error and communication rate, and how it was affected by using rejection sampling for evaluating the no-event likelihoods. As $\pd{x_k}{y_{1:k}}$ of the considered system is multimodal, an estimation of the cross-entropy 
\begin{equation}
    \varepsilon_{ce} \approx -\frac{1}{T}\sum_{k=1}^T \log \left( \hat{p}(x_k | \mathcal{Y}_{1:k}) \right)
\end{equation}
was used for comparison, which measures the average negative log-likelihood of observing the true state given the posterior approximation. The state estimation was performed over a parameter sweep with $\Delta \in [0, 12]$ and $N \in [25, 400]$. For each parameter setting, the estimation was performed over a randomly generated trajectory $x_{1:T}, y_{1:T}$ where $T = 10^5$.

\subsubsection{Results}
The results are shown in Figure~\ref{fig:CE_comparsion_D} and~\ref{fig:CE_comparsion_N}. As can be seen, the APF outperforms the BPF, especially when considering time steps with events ($\gamma_k=1$) as predicted in Section~\ref{sec:event_sample_degeneracy}. It further seems that the rejection sampling method manages to deliver a performance very close to the analytic likelihood evaluation for most communication rates. For higher rates, where the triggering probability is high, particle depletion due to rejection is a likely cause of the deterioration shown in Figure \ref{fig:comparefiltertypes_noTrig}.

One thing to notice is the surprising result of $\varepsilon_{ce}$ for $\gamma_k = 1$, shown in Figure \ref{fig:comparefiltertypes_trig}. As can be seen, when $C_r$ is very low reducing it further gives a steep improvement in $\varepsilon_{ce}$. One plausible explanation is that this is an artifact of the bounded and multimodal state space of the system. A low $C_r$ implies that $H_k$ is large, and thus particles with significant likelihood on $H_k$ will cover a large part of the bounded state space, potentially both modes completely. A trigger thus implies that the true $x_k$ is an outlier, which might be easy to pinpoint with the given $y_k$. Furthermore, as can be seen in Figure \ref{fig:comparefiltertype_all}, the total $\varepsilon_{ce}$ does not decrease as $C_r$ decreases. 

\subsection{Evaluating the trigger probability estimation and its effect on the heuristic choice of maximizer} \label{sec:trig_prob_estimation}

In the second simulation study, we wanted to examine how well we could manage the online estimation of the triggering probability and how it affects $T_c(\n)$, as it is essential in finding a good horizon for the precomputation approach. Both BPF and the APF using rejection sampling for the likelihood evaluation were considered to form the triggering probability estimation $\hat{p}_{\text{PF}}(n)$. As the true $p_T(n)$ is not known, it was estimated for each particle filter by drawing $x_0,X^i_0 \sim p(x_0)~\forall i \in 1:N$, propagating the system/particles and registering the time steps until the first event. This was then repeated $10^5$ times to form the naive Monte Carlo estimate $\hat{p}_{\text{MC}}(n)$. The entire estimation procedure was further repeated 300 times to provide bounds on both $\hat{p}_{\{\text{MC},\text{PF}\}}(n)$, and run over a parameter sweep with $\Delta = \{2.5, 7.5\}$ and $N \in [25, 400]$.

\subsubsection{Results}
The results can be seen in Figure~\ref{fig:est_trigger_prob}, where we have both compared $p_T(n)$ over the time steps for a certain $N$ and $\Delta$ in Figure~\ref{fig:est_trigger_prob_t}, and the RMSE between $\hat{p}_{\text{PF}}(n)$ and $\hat{p}_{\text{MC}}(n)$ over $N$ for both values of $\Delta$ in Figure~\ref{fig:est_trigger_prob_N}. As can be seen, the bounds of the naive Monte Carlo approach are very tight and can thus be used as a substitute for the unknown true value. For the APF, the estimated triggering probabilities have a larger deviation, but the mean lies close to the mean of $\hat{p}_{\text{MC}}(n)$. 

Using the naive Monte-Carlo and APF estimations of $p_T(n)$, we further examined how well the heuristic choice outlined in Section~\ref{sec:choosing_nhat_opt} of finding the maximizer of $T_c(\n)$ with $\hat{p}_{\text{PF}}(n)$  performed compared the true maximizer of $T_c(\n)$ with $\hat{p}_{\text{MC}}(n)$. Using a single $\hat{p}_{\text{PF}}(n)$ and $\hat{p}_{\text{MC}}(n)$, two varieties of $T_c(\n)$ were created. The results can be seen in Figure~\ref{fig:Tc_comparsion}. Here the circle shows the  $1 - c$ quantile, the square the true maximizer of $T_c(\n)$ evaluated with $\hat{p}_{\text{MC}}(n)$, and the triangle the maximizer found by the heuristic method of $T_c(\n)$ evaluated with $\hat{p}_{\text{PF}}(n)$. As can be seen, both estimation methods of $p_T(n)$ provide similar $T_c(\n)$, and the heuristic $\n$ comes close to the true maximizer for all tested values of $c$. 

\section{Discussion} \label{sec:discussion} 

%Regarding event-sample degeneracy
Reducing event sample degeneracy using the APF as shown in the results from Section \ref{sec:eval_pf_performace} seems promising. However, the benefit of using the APF is more pronounced where $\gamma_k = 1$. Hence, for certain well-behaved systems, it could be viable to run a type of hybrid where the APF would only be employed at $\gamma_k=1$, and the BPF at $\gamma_k = 0$. In these cases, the implementation of the precomputation approach could be made very efficient, as both $\pd{y_k}{\cY_{1:k-1}}$ required for $H_k$ and $\pd{\gamma_k = 1}{\cY_{1:k-1}}$ can be obtain directly via the propagated particles and the corresponding weights of the BPF. 

% Regarding transmission reduction
According to the results shown in Section \ref{sec:trig_prob_estimation}, precomputation with the heuristic maximizer seems like a promising solution to the closed-loop triggering problem. Due to the obtained lower bound of the global maximizer of $T_c(\n)$, we can expect the heuristic to generate near-optimal horizons with respect to the sensor energy consumption as long as $c$, the fraction between time to compute/transmit 1 sample and sampling time, is small. So far no useful upper bound has been found, and we suspect that this will be hard if no assumptions on the probabilities $p_T(n)$ are made.

% Regarding the likelihood evaluation / computational speed
It is comforting that the no-event likelihood evaluation using rejection sampling seems to be able to give an adequate performance. Numerically evaluating the likelihood using more exact methods would incur a heavy computational cost and render particle filtering for event-based state estimation infeasible in many situations. It is however worth keeping in mind that even though the rejection sampling might not introduce particle degeneracy, it might still give a poor approximation of the posterior. In a real setting, evaluation with a couple of choices of small $M$ to identify a suitable trade-off should be tested. 

% Future work
As this paper focuses on the single sensor case, a clear future direction would be to consider the case of multiple sensors. This would imply that events could be triggered individually at each sensor, and $H_k$ thus described by a union of both potential measurement values in some dimensions, and actual measurement values in other. At a glance, our method for deriving an approximate fully-adapted filter should be applicable in this setting by simply discretizing $H_k$ in the no-event dimensions. Further, for the precomputation scheme the observer would need to broadcast the updated $H_{k+\n}^s$ to all sensors as soon as one sensors would trigger. Concerning the computational aspects the presented results should be directly applicable, albeit the rejection sampling method for computing the likelihood of more importance.

\section{Conclusion} \label{sec:conclusion}

In this paper we have discussed the three issues of event-sample degeneracy, closed-loop triggering and computational load that occur when using particle filtering under event-based sampling, and provided ways that they can be mitigated.

% use section* for acknowledgment
\section*{Acknowledgment}

This work was partially supported by the Wallenberg AI, Autonomous Systems and Software Program (WASP) funded by the Knut and Alice Wallenberg Foundation. The authors are members of the ELLIIT Excellence Center at Lund University.

\ifCLASSOPTIONcaptionsoff
  \newpage
\fi

\bibliographystyle{IEEEtran}
\bibliography{IEEEabrv,sources}

\appendices
\appendix[Proof of Lemma \ref{lemma:diff}] \label{sec:appendix_lemma_diff}

Start by expanding $T_c(\n)$ as
\begin{align}
    T_c(\n) = \mathbb{E}\left( \max(n - c\n, 0) \right) = \sum_{i > c\n}^{\n} \left(i - c\n \right)p\left(i \mid \n \right) 
\end{align}
Then consider
\begin{align}
    \begin{split}
    &\quad T_c(\n+1) =  \sum_{i>c(\n+1)}^{\n+1} \left(i - c(\n+1) \right) p\left(i \mid \n + 1 \right) \\
    &=  \underbrace{\sum_{i > c(\n+1)}^{\n+1} i p\left(i \mid \n + 1 \right)}_{\textbf{I}} - \underbrace{\sum_{i > c(\n+1)}^{\n+1} c(\n+1) p \left( i \mid \n +1 \right)}_{\textbf{II}}
    \end{split}
\end{align}
where,
\begin{align}
\begin{split}
    \textbf{I} &= \left(\n + 1 \right) \left( 1 - \sum_{i=1}^{\n} p_T(i) \right) + \sum^{\n}_{i > c(\n + 1)} ip_T(i) \\
    &= \left(\n + 1 \right) \left( 1 - \sum_{i=1}^{\n} p_T(i) \right) + \sum^{\n}_{i > c\n} ip_T(i) \\
    &\quad - \lfloor c(\n+1)\rfloor p_T \left( \lfloor c(\n+1)\rfloor \right) \left(\lfloor c(\n+1)\rfloor - \lfloor c\n\rfloor \right) \\
    &= \n - \n\sum_{i=1}^{\n-1} p_T(i) + 1 - \sum_{i=1}^{\n} p_T(i) \ + \sum^{\n-1}_{i > c\n} ip_T(i) \\
    &\quad - \lfloor c(\n+1)\rfloor p_T \left( \lfloor c(\n+1)\rfloor \right) \left(\lfloor c(\n+1)\rfloor - \lfloor c\n\rfloor \right) \\
    &= \sum_{i > c\n}^{\n} i p \left(i \mid \n \right) + 1 - \sum_{i}^{\n} p_T(i) \\
    &\quad - \lfloor c(\n+1)\rfloor p_T \left( \lfloor c(\n+1)\rfloor \right) \left(\lfloor c(\n+1)\rfloor - \lfloor c\n\rfloor \right)
\end{split}
\end{align}
and similarly,
\begin{align}
\begin{split}
    \textbf{II} &=  c\left(\n + 1 \right) \left( 1 - \sum_{i=1}^{\n} p_T(i) \right) + \sum^{\n}_{i > c(\n + 1)} c(\n+1)p_T(i) \\
    &=  c\left(\n + 1 \right) \left( 1 - \sum_{i=1}^{\n} p_T(i) \right) + \sum^{\n}_{i > c\n} c(\n+1)p_T(i) \\
    & \quad - c(\n+1) p_T \left( \lfloor c(\n+1)\rfloor \right) \left(\lfloor c(\n+1)\rfloor - \lfloor c\n\rfloor \right) \\
    &=  c\left(\n + 1 \right) - c\left(\n + 1 \right)\sum_{i=1}^{\n-1} p_T(i) + \sum^{\n-1}_{i > c\n} c(\n+1)p_T(i) \\
    & \quad - c(\n+1) p_T \left( \lfloor c(\n+1)\rfloor \right) \left(\lfloor c(\n+1)\rfloor - \lfloor c\n\rfloor \right) \\
    &=  c\n - c\n \sum_{i=1}^{\n-1} p_T(i) + c - c \sum_{i=1}^{\n-1} p_T(i) \\
    & \quad + \sum^{\n-1}_{i > c\n} c\n p_T(i) +  c \sum^{\n-1}_{i > c\n} p_T(i) \\
    & \quad - c(\n+1) p_T \left( \lfloor c(\n+1)\rfloor \right) \left(\lfloor c(\n+1)\rfloor - \lfloor c\n\rfloor \right) \\
    &= \sum_{i > c\n}^{\n} c\n p \left(i \mid \n \right) + c - c\sum_{i=1}^{i \leq c\n} p_T(i) \\
    & \quad - c(\n+1) p_T \left( \lfloor c(\n+1)\rfloor \right) \left(\lfloor c(\n+1)\rfloor - \lfloor c\n\rfloor \right)
\end{split}\end{align}
Thus $\textbf{I} - \textbf{II}$ becomes
\begin{align}
\begin{split}
    T_c&(\n + 1) = \sum_{i > c \n}^{\n} \left( i - c\n \right) p\left(i \mid \n \right) \\
    &+ 1 - \sum_{i=1}^{\n} p_T(i) - c + c\sum^{i \leq c\n}_{i=1} p_T(i) \\
    &+ \lfloor c(\n+1)\rfloor p_T \left( \lfloor c(\n+1)\rfloor \right) \left(\lfloor c(\n+1)\rfloor - \lfloor c\n\rfloor \right) \\
   \Delta T_c&(\n) = 1 - \sum^{\n}_{i=1} p_T(i) - c\left( 1 -  \sum^{i \leq c\n}_{i=1} p_T(i) \right) \\
    &+ \lfloor c(\n+1)\rfloor p_T \left( \lfloor c(\n+1)\rfloor \right) \left(\lfloor c(\n+1)\rfloor - \lfloor c\n\rfloor \right)
\end{split}
\end{align}

\end{document}